\pgfplotsset{compat=1.9}
\newtheorem{theorem}{Theorem}[section]
\newtheorem{lemma}[theorem]{Lemma}
\newtheorem{corollary}[theorem]{Corollary}
\newtheorem{proposition}[theorem]{Proposition}
\newtheorem{problem}{Problem}
\newtheorem{remark}[theorem]{Remark}
\newtheorem{assumption}{Assumption}
\newtheorem{condition}{Condition}
\newcommand{\oprocendsymbol}{\hbox{$\bullet$}}
\newcommand{\oprocend}{\relax\ifmmode\else\unskip\hfill\fi\oprocendsymbol}
\newcommand{\longthmtitle}[1]{\mbox{}\textup{\emph{(#1):}}}
\newcommand{\stkout}[1]{\ifmmode\text{\sout{\ensuremath{#1}}}\else\sout{#1}\fi}
\definecolor{paleGreen}{rgb}{.3, .7, .3}
\newcommand{\co}[1]{\operatorname{co}(#1)}
\newcommand{\until}[1]{\{1,\dots,#1\}}
\DeclareMathOperator*{\argmax}{arg\,max}
\DeclareMathOperator*{\argmin}{arg\,min}
\newcommand{\interior}[1]{\operatorname{int}(#1)}
\newcommand{\proj}[2]{\operatorname{Proj}_{#1}(#2)}
\newcommand{\real}{\mathbb{R}}
\newcommand{\Cc}{\mathcal{C}}
\newcommand{\Xc}{\mathcal{X}}
\newcommand{\Sc}{\mathcal{S}}
\newcommand{\Uc}{\mathcal{U}}
\newcommand{\Ic}{\mathcal{I}}
\newcommand{\Lc}{\mathcal{L}}
\begin{document}
\title{\LARGE \bf Safe Control of Second-Order Systems with
  Linear~Constraints}

\author{Mohammed Alyaseen \quad Nikolay Atanasov \quad Jorge Cort\'es
  \thanks{M. Alyaseen, N. Atanasov, and J. Cort\'es are with the
    Contextual Robotics Institute, UC San Diego,
    \texttt{\{malyasee,natanasov,cortes\}@ucsd.edu}. M. Alyaseen is
    also affiliated with Kuwait University as a holder of a
    scholarship.} %
}

\maketitle

\begin{abstract}
  Control barrier functions (CBFs) offer a powerful tool for enforcing
  safety specifications in control synthesis. This paper deals with
  the problem of constructing valid CBFs. Given a second-order system
  and any desired safety set with linear boundaries in the position
  space, we construct a provably control-invariant subset of this
  desired safety set. The constructed subset does not sacrifice any
  positions allowed by the desired safety set, which can be nonconvex.
  We show how our construction can also meet safety specification on
  the velocity. We then demonstrate that if the system satisfies
  standard Euler-Lagrange systems properties then our construction can
  also handle constraints on the allowable control inputs. We finally
  show the efficacy of the proposed method in a numerical example of
  keeping a 2D robot arm safe from collision.
\end{abstract}


\section{Introduction}
Control barrier functions (CBF) provide a flexible framework in
safety-critical control to certify the forward invariance of a desired
set with respect to system trajectories and design feedback
controllers that ensure it.  Because of their versatility, CBFs have
made their way into numerous applications in robotics, transportation,
power systems, and beyond.  By definition, every boundary point of the
CBF's 0-superlevel set admits a control value that holds the system's
trajectory from instantaneously leaving it. This point-wise condition
is known as the CBF condition.
Finding CBFs is a challeging task:
it amounts to finding a set whose states can be made safe, i.e., for
which control actions ensuring safety can be identified. This is not
trivial given the complexity of the dynamics and limitations on the
control inputs. After clearing this challenge, one must still figure
out whether a well-behaved control law can be synthesized out of all the
point-wise safe control actions.
%
%
%
In this work, we construct valid CBFs that enforce any positional
safety requirements with linear boundaries for second-order dynamics
and provide an associated continuous safe controller.
%
%

\emph{Literature review:} Whether from Nagumo's theorem \cite{MN:42}
or from comparison results in the theory of differential equations
\cite{HK:02}, the CBF condition was first derived for smooth functions
\cite{PW-FA:07,ADA-SC-ME-GN-KS-PT:19,RK-ADA-SC:21}. This condition was
extended to non-smooth functions in multiple works, such as
\cite{MG-AI-RGS-WED:22f,PG-JC-ME:17-csl,LL-DVD:19}. Many approaches
have been proposed to construct a CBF or verify whether a given
function is a CBF. One approach applies learning methods to construct
CBFs
\cite{KL-CQ-JC-NA:21-ral,CW-YM-YL-SLS-JL:21,MS-AD-SC-PAV:20,AR-HH-LL-HZ-DVD-ST-NM:20}. Another
uses reachability analysis to construct the maximal invariant set and
use it in safe control design \cite{JJC-DL-KS-CJT-SLH:21}. Another
class uses backstepping to design CBFs for cascaded
systems~\cite{AJT-PO-TGM-ADA:22}. Still, another group of works, most
closely related to the treatment here, utilizes properties of specific
systems to construct suitable CBFs. For instance,
\cite{AC:21a,MS-FD-SM:23} constructs CBFs for polynomial systems using
sum-of-square optimization.  The work~\cite{WSC-DVD:22} constructs
non-smooth CBFs for fully actuated Euler-Lagrange systems, with
constraints on position, velocity, and inputs given by hypercubes.
The work~\cite{AWF-TH:21} proposes a method to construct a safe subset
of a hyper-sphere in the position space, assuming no input
constraints. Both these works consider convex constraint sets. In
contrast, our treatment here is valid for
a highly expressive class of positional (potentially nonconvex)
constraints defined by linear boundaries.
%
%
Once a valid CBF is constructed, safe feedback controllers are usually
synthesized via state-parameterized optimization
programs~\cite{ADA-XX-JWG-PT:17,ADA-SC-ME-GN-KS-PT:19} due to their
flexibility, convenience, and computational lightness. This motivated
the study of the regularity properties of such controllers, see
e.g.~\cite{BJM-MJP-ADA:15,PM-AA-JC:25-ejc,MA-NA-JC:25-tac}.
%
%
Our recent work~\cite{MA-NA-JC:24-auto} synthesizes a provably
feasible optimization-based safe feedback controller for safe sets
given by arbitrarily nested unions and intersections of superlevel
sets of differentiable functions.

\emph{Statement of Contributions:} We consider\footnote{We use the
  following notation. 
  We let $\mathbf{1_{n\times m}}$ and $\mathbf{0_{n\times m}}$ denote
  the $n \times m$ matrices of ones and zeroes, resp. Likewise,
  $\mathbf{1_{n}}$ and $\mathbf{0_{n}}$ denote the $n$ vectors of ones
  and zeroes, resp.  The boundary, interior, and convex hull of $\Sc$
  are $\partial \Sc$, $\interior{\Sc}$, and $\co{\Sc}$, resp. The
  projection of $\Cc \subset \real^{n}$ on the first $n'$ components
  is denoted $\proj{n'}{\Cc}$.  The vectors of the standard basis of
  $\real^n$ are denoted $\{e_\ell\}_{\ell=1}^n$. We denote the 2-norm
  of a vector $x$ by $\|x\|$. The norm of a matrix $A$ is the induced
  2-norm, $\|A\| \coloneqq \sup_{x\neq 0} \frac{\|Ax\|}{\|x\|}$. We
  refer to the 0-superlevel set of a function simply as superlevel
  set.  } second-order system dynamics and positional constraints
specified by nested unions and intersections of half-spaces.  We
construct a control-invariant
subset of the full state space which contains all positions allowed by
the original positional constraints. We derive a general condition
which, if satisfied, proves that our constructed set is safe for
general, possibly not fully actuated, second-order systems and provide
an associated QP safeguarding controller.  We then show that this
safety condition is always satisfied for fully actuated systems.
We further show that a compact allowable controls set suffices for
safety when the desired safe set is bounded. We show how our method
can be utilized to incorporate velocity and input constraints when the
dynamics satisfy standard Euler-Lagrange system properties. Finally,
we apply our method to design safe controls for a 2D robotic arm.

\section{Problem Statement}\label{sec:prob}
Consider the second-order dynamics
\begin{align}
  \dot x = f(x) + G(x)u \label{eq:SODynamics}
\end{align}
where $x = (x_1,x_2)$, $x_1, x_2 \in \real^n$, $u \in \real^m$,
$f(x) = (x_2,f_2(x))$, $G(x) = (\mathbf{0}_{n \times m}, G_2(x))$,
with $f_2:\real^{2n} \to \real^n$ and
$G_2: \real^{2n} \to \real^{n \times m}$ Lipschitz functions. 
Consider also the half-spaces parameterized by
$i \in \Ic \coloneqq \{1,\dots,r\}$
\begin{align}
  \Cc_i \coloneqq \{x \in \real^{2n}\;|\; h_i(x) \coloneqq a_i^\top
  x_1 + b_i \geq 0\},\label{eq:ci} 
\end{align}
with $\|a_i\| \neq 0 \neq b_i$. We require that if $i\neq i'$ then the
augmented vectors $(a_i,b_i)$ and $(a_{i'},b_{i'})$ are linearly
independent. 
Our objective is to design a control law for \eqref{eq:SODynamics} that keeps invariant a desired set $\Cc$ given as a union of intersections of the half-spaces $\Cc_i$'s. That is,
\begin{align}
  \Cc &= \bigcup_{\ell \in \Lc} \bigcap_{i \in \Ic^\ell}\Cc_i, \label{eq:C}
\end{align}
where $\Lc \subset \mathbb{N}$
and the sets $\Ic^\ell \subset \Ic$
%
%
are sets of indices. This set corresponds to the superlevel set
%
%
of the
function
\begin{align}
  h(x) = \max_{\ell \in \Lc} \min_{i \in \Ic^\ell} h_i(x). \label{eq:h}
\end{align}
Note that the safety set $\Cc$ constrains only the states~$x_1$
corresponding to the generalized position. This setup is common in
many problems, such as collision avoidance~\cite{AS-KK-JB-AB-PT-AA:21},
where the main concern is to avoid the physical locations occupied by
obstacles.
%
This form for $\Cc$ is flexible enough to capture the safety
requirement of staying in any set of positions with linear boundaries while
simultaneously avoiding obstacles of linear boundaries. This is applicable in a wide range of situations for autonomous robotic systems from geofencing to protect a human collaborator to avoiding collisions in human environments.

We impose the following structural assumptions on the safety
requirement.  Let $\Sc_{\cap}$ be the collection of sets of indices
whose corresponding half-spaces intersect in $\Cc$: that is, if
$I \subseteq \Ic$ is such that
%
%
$\left ( \cap_{i \in I} \Cc_i \right ) \cap \Cc \neq \emptyset$, then
$I \in \Sc_\cap$.
%
%

\begin{assumption}\label{as:bounds}
  {\rm The set
    $\proj{n}{\Cc}$
    %
    is compact. Furthermore, for any $I \in \Sc_\cap$, one point
    $y_I \in \Cc$
    can be chosen to satisfy $h_i(y_I) > 0$ for all
    $i \in I$.  } \oprocend
\end{assumption}

Assumption~\ref{as:bounds} is reasonable.
%
Assuming the compactness of $\proj{n}{\Cc}$ is the same as saying that
the set of safe positions $x_1$ is compact.
%
%
A sufficient condition for this is the boundedness of
$\cap_{i \in \Ic^\ell}\Cc_i$ for every $\ell$. The last part of the
assumption only requires that any nonempty intersection
$\left ( \cap_{i \in I}\Cc_i \right ) \cap \Cc$ has a non-empty
interior. We discuss the impact of the specific choice of points
$\{y_I\}$ in Remark~\ref{rem:OI}.

Since our system \eqref{eq:SODynamics} is second order, the control is
only available in the order of the generalized acceleration
$\dot x_2$. Thus, $\Cc$, which only constrains~$x_1$, is generally not
control-invariant due to the lack of constraints on the velocity:
e.g., initial conditions starting exactly at the boundary of $\Cc$
with a velocity heading outwards are unsafe, with no possible control
value to counter it. Hence, there is a need to identify a
control-invariant set $\Cc^s \subseteq \Cc$ that constrains the
velocity. This construction should contain as much of $\proj{n}{\Cc}$
as possible.
%
We thus formulate our problem as follows.

\begin{problem}\label{prob:problem}
  \rm{Given the second-order dynamics~\eqref{eq:SODynamics} and the
    set $\Cc$ defined by~\eqref{eq:C} that satisfies Assumption~\ref{as:bounds}, construct a set $\Cc^s$ such
    that:
    \begin{enumerate}
    \item $\Cc^s \subseteq \Cc$,
    \item $\proj{n}{\Cc^s} = \proj{n}{\Cc}$,
      %
      %
    \item $\Cc^s$ is control-invariant,
    \end{enumerate}
    and design a continuous controller that renders $\Cc^s$
    invariant.}
\end{problem}

\section{Construction of Control-Invariant Set}\label{sec:cbarConstruction}

In this section, we solve Problem~\ref{prob:problem} by constructing a
function $B$ of the form
\begin{align*}
  B(x) = \max_{\ell \in \bar \Lc} \min_{i \in \bar \Ic^\ell} B_i(x) ,
\end{align*}
and then proving that its superlevel set $\Cc^s$
satisfies the requirements in Problem~\ref{prob:problem}. 
Constructing $B$ amounts to defining the functions $B_i$ and the index sets $\bar \Lc$ and $\bar \Ic^\ell$.

\textbf{Definition of functions:} For each $1 \leq i \leq r$, define 	$B_i(x) = h_i(x)$ and
  \begin{equation}\label{eq:bs}
    B_{i+r}(x) = a_i^\top x_2 + \gamma(a_i^\top x_1 + b_i)-\epsilon,
  \end{equation}
  where $\epsilon$ and
  $\gamma$ are two positive design parameters, each of which will have a special role in customizing the design and proving its safety.
  We denote by $\Cc^s_i$ the superlevel set of $B_i$.

\textbf{Definition of sets:} Let $\bar \Lc = \Lc$ and
  $\bar \Ic^\ell = \Ic^\ell \cup (\{r\}+\Ic^\ell)$.

The choice of $B$ follows this logic: for $i \in \until{r}$, $\Cc_i$
is the superlevel set of $B_i$; the function $B_{i+r}$ is then chosen
such that its superlevel set contains the points on the boundary of
$\Cc_i$ only if the system drift $f$ points towards the interior of
the safe set at those boundary points. Thus, the system is safe
without requiring any inputs at the boundary points defined by $B_i=0$
with $i \in \until{r}$ if these boundary points are in the superlevel
set of $B_{i+r}$. This choice of $B_{i+r}$ is inspired by the concept
of high-order control barrier function~\cite{WX-CB:22}.

The following result shows that this construction satisfies requirements (i) and (ii) in Problem~\ref{prob:problem}.

\begin{lemma}[No Positions Lost]\label{lem:noLostPos}
  Let $\Cc^s$ be the superlevel set of $B$. Then $\Cc^s \subset \Cc$.
  Under Assumption~\ref{as:bounds}, define
  \begin{align}\label{eq:delta}
    \delta \coloneqq \min_{I \in \Sc_\cap, \; i \in I} h_i(y_I) >0,
  \end{align}
  and let $\gamma, \epsilon$ with $\gamma \delta > \epsilon$. Then,
  $\proj{n}{\Cc^s} = \proj{n}{\Cc}$.  \oprocend
\end{lemma}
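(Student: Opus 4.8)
The claim has two parts. The first, $\Cc^s \subset \Cc$, should be immediate: by construction $\bar\Ic^\ell \supseteq \Ic^\ell$, so for any $x$ with $B(x) \geq 0$ there is some $\ell$ with $\min_{i \in \bar\Ic^\ell} B_i(x) \geq 0$; restricting the min to the sub-index-set $\Ic^\ell$ and using $B_i = h_i$ for $i \leq r$ gives $\min_{i \in \Ic^\ell} h_i(x) \geq 0$, hence $h(x) \geq 0$ and $x \in \Cc$. I would state this in one or two sentences.

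The second part, $\proj{n}{\Cc^s} = \proj{n}{\Cc}$, is the substantive claim. One inclusion, $\proj{n}{\Cc^s} \subseteq \proj{n}{\Cc}$, follows from $\Cc^s \subset \Cc$. For the reverse, I would take an arbitrary position $x_1 \in \proj{n}{\Cc}$ and exhibit a velocity $x_2$ such that $(x_1,x_2) \in \Cc^s$. The natural choice is $x_2 = \mathbf{0}_n$: then $B_{i+r}(x_1,\mathbf{0}_n) = \gamma h_i(x_1) - \epsilon$. So the task reduces to finding an $\ell \in \Lc$ with $h_i(x_1) \geq 0$ for all $i \in \Ic^\ell$ (which holds because $x_1 \in \proj{n}{\Cc}$, pick the witnessing $\ell$) and simultaneously $\gamma h_i(x_1) - \epsilon \geq 0$, i.e. $h_i(x_1) \geq \epsilon/\gamma$, for all $i \in \Ic^\ell$. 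The trouble is that $x_1$ may lie on or near the boundary of $\cap_{i\in\Ic^\ell}\Cc_i$, where some $h_i(x_1)$ is small or zero, so the zero-velocity point need not be in $\Cc^s$ for that particular $\ell$.

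This is where Assumption~\ref{as:bounds} and the interior points $y_I$ enter, and I expect this to be the main obstacle. The fix is to not use $x_1$ itself but a point slightly pushed toward the interior. Concretely: let $\ell$ witness $x_1 \in \cap_{i\in\Ic^\ell}\Cc_i$, and let $I = \{i \in \Ic^\ell : h_i(x_1) = 0\}$ be the set of active constraints (or more conservatively $I=\Ic^\ell$); then $I \in \Sc_\cap$ since $x_1$ realizes a nonempty intersection with $\Cc$, so the assumption provides $y_I \in \Cc$ with $h_i(y_I) > 0$ for all $i \in I$, and by definition of $\delta$, $h_i(y_I) \geq \delta$. Now consider the segment from $x_1$ to $(y_I)_1$ (the position component of $y_I$); for each $i \in \Ic^\ell$, $h_i$ is affine, so along this segment $h_i$ interpolates between $h_i(x_1) \geq 0$ and $h_i(y_I)$. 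I would either argue that moving a full step to $(y_I)_1$ keeps all $h_i$, $i \in \Ic^\ell$, nonnegative and makes the active ones at least $\delta \geq \epsilon/\gamma$ — but the inactive ones of $\Ic^\ell$ might not have been checked at $y_I$ — or, cleaner, take a convex combination $z_1 = (1-\lambda) x_1 + \lambda (y_I)_1$ with $\lambda$ small enough. Hmm, but then the active constraints only get value $\lambda h_i(y_I) \geq \lambda\delta$, and we'd need $\lambda \delta \geq \epsilon/\gamma$, pushing $\lambda$ up, while inactive constraints near zero would need $\lambda$ small — so a single $\lambda$ may not work. The right move, I believe, is to choose $I = \Ic^\ell$ outright: then $y_I$ itself satisfies $h_i(y_I) \geq \delta > \epsilon/\gamma$ for all $i \in \Ic^\ell$, so $(y_I)_1$ with zero velocity lies in $\Cc^s_i$ for all $i \in \bar\Ic^\ell$ — but that shows $(y_I)_1 \in \proj{n}{\Cc^s}$, not $x_1$. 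So the genuinely correct argument must interpolate: set $z(\lambda) = \bigl((1-\lambda)x_1 + \lambda(y_I)_1,\ \mathbf{0}_n\bigr)$ with $I = \Ic^\ell$; for $i \in \Ic^\ell$, $B_i(z(\lambda)) = (1-\lambda)h_i(x_1) + \lambda h_i(y_I) \geq \lambda\delta \geq 0$, and $B_{i+r}(z(\lambda)) = \gamma\bigl((1-\lambda)h_i(x_1) + \lambda h_i(y_I)\bigr) - \epsilon \geq \gamma\lambda\delta - \epsilon$, which is $\geq 0$ once $\lambda \geq \epsilon/(\gamma\delta)$; since $\gamma\delta > \epsilon$ we may take $\lambda \in (\epsilon/(\gamma\delta), 1]$, e.g. $\lambda = 1$, giving $z = y_I$ with zero velocity in $\Cc^s$. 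That proves $(y_I)_1 = (y_{\Ic^\ell})_1 \in \proj{n}{\Cc^s}$ — still not $x_1$.

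I see the resolution now: the statement $\proj{n}{\Cc^s} = \proj{n}{\Cc}$ must be proved by showing every $x_1\in\proj{n}{\Cc}$ is attained, and the velocity we attach need not be zero. The right choice is $x_2 = \mathbf{0}_n$ only when all active $h_i(x_1)$ are large; otherwise attach a velocity pointing inward. But since $B_{i+r}$ is linear in $x_2$ via $a_i^\top x_2$, and the $a_i$ for $i \in \Ic^\ell$ may be many, one needs $a_i^\top x_2 \geq \epsilon - \gamma h_i(x_1)$ for all $i \in \Ic^\ell$ simultaneously; this is a linear feasibility problem in $x_2$, and a natural candidate is $x_2$ proportional to $(y_I)_1 - x_1$, which makes $a_i^\top x_2 = c\,(h_i(y_I) - h_i(x_1))$. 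With $I = \Ic^\ell$ and $c$ large, $a_i^\top x_2 = c(h_i(y_I) - h_i(x_1))$; if $h_i(x_1) \le h_i(y_I)$ this is $\geq 0$ and can be made $\geq \epsilon$ by taking $c \geq \epsilon/\delta$ on the active ones — but for inactive $i$ with $h_i(x_1) > h_i(y_I)$ this term is negative, though then $\gamma h_i(x_1)$ is already large and may absorb it. So I would: pick $c = \epsilon/\delta$, set $x_2 = c((y_I)_1 - x_1)$ with $I=\Ic^\ell$, and verify $B_i \geq 0$ (trivial, $=h_i(x_1)\geq 0$) and $B_{i+r}(x_1,x_2) = c(h_i(y_I)-h_i(x_1)) + \gamma h_i(x_1) - \epsilon \geq \epsilon - c\,h_i(x_1) + \gamma h_i(x_1) - \epsilon = (\gamma - c)h_i(x_1)$ using $c\,h_i(y_I) \geq c\delta = \epsilon$; this is $\geq 0$ provided $\gamma \geq c = \epsilon/\delta$, i.e. $\gamma\delta \geq \epsilon$, exactly the hypothesis. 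Hence $(x_1,x_2) \in \bigcap_{i\in\bar\Ic^\ell}\Cc^s_i \subseteq \Cc^s$, so $x_1 \in \proj{n}{\Cc^s}$, completing the reverse inclusion. The main obstacle, as the above shows, is getting the velocity construction right so that \emph{all} constraints in $\Ic^\ell$ — active and inactive — are satisfied at once; the linearity of the $h_i$ and the uniform lower bound $\delta$ on the interior points are exactly what make a single, explicit choice of $x_2$ work under the hypothesis $\gamma\delta > \epsilon$.
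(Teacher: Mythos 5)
Your final argument is correct and is essentially the paper's own proof: the paper likewise takes the velocity $x_2' = \gamma\sigma\bigl((y_{\Ic^{\ell'}})_1 - x_1\bigr)$ with $\sigma\in(0,1)$ chosen so that $\gamma\sigma\delta>\epsilon$, which is your $x_2 = c\bigl((y_I)_1 - x_1\bigr)$ with $c=\epsilon/\delta$ up to the choice of constant, and both verifications rest on the affine interpolation identity $a_i^\top x_2 + \gamma h_i(x_1) = (\gamma-c)h_i(x_1) + c\,h_i(y_I)$ together with $h_i(y_I)\geq\delta$. The exploratory detours (zero velocity, perturbing the position) are dead ends you correctly abandon; only the final construction should appear in a written proof.
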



\begin{proof}
  That $\Cc^s$ is a subset of $\Cc$ follows directly from the
  definition of these sets and the fact that $\Cc^s_i= \Cc_i$ for
  $i \in \until{r}$.  This also implies that
  $\proj{n}{\Cc^s} \subset \proj{n}{\Cc}$.
  Note that Assumption~\ref{as:bounds} ensures that $\delta >
  0$. Since by assumption $\gamma \delta > \epsilon$, there exists
  $0<\sigma<1$ such that $\gamma \sigma \delta > \epsilon$.  Given
  $(x_1,x_2) \in \Cc = \cup_{\ell \in \Lc} \cap_{i \in
    \Ic^\ell}\Cc_i$, there is an $\ell' \in \Lc$ such that
  $(x_1, x_2) \in \Cc_i$, for all $i \in \Ic^{\ell'}$, i.e.,
  $h_i(x) = B_i(x) \geq 0$, for all $i \in \Ic^{\ell'}$. By
  Assumption~\ref{as:bounds}, there exists $y_{\Ic^{\ell'}} \in \Cc$
  satisfying $h_i(y_{\Ic^{\ell'}}) > 0$ for all $i \in \Ic^{\ell'}$.
  %
  %
  The choice $x' = (x_1,x_2')$, where $x_2' = -\gamma \sigma (x_1 - y_{\Ic^{\ell'}})$
  gives
  \begin{align*}
    B_{i+r}(x')
    &= a_i^\top x_2' + \gamma B_i(x) - \epsilon
    \\
    & = \gamma (1-\sigma)B_i(x) + \gamma \sigma B_i(y_{\Ic^{\ell'}})
      - \epsilon
    \\
    & \geq \gamma (1-\sigma)B_i(x) + \gamma \sigma \delta - \epsilon
    \\
    & > \gamma (1-\sigma)B_i(x) \geq 0.
  \end{align*}
  Therefore, $B_i(x') \geq 0$ for all $i \in \bar \Ic^{\ell'}$,
  implying that
  $x' \in \cap_{i \in \bar \Ic^{\ell'}}\Cc^s_i \subseteq
  \Cc^s$. Therefore, $\proj{n}{\Cc} \subset \proj{n}{\Cc^s}$. 
\end{proof}

\begin{remark}[Maximizing Flexibility of Safe Set
  Design]\label{rem:OI}
  \rm{Lemma~\ref{lem:noLostPos} requires that the parameters
    $\gamma$ and $\epsilon$ satisfy $\gamma \delta >
    \epsilon$. Since $\delta$ is dependent on the choice of points
    $\{y_I\}$, whose existence is assumed in
    Assumption~\ref{as:bounds}, the choice of $\gamma$ and
    $\epsilon$ is dependent on the set of points $\{y_I\}$. The
    greater $\delta$, the more flexibility for choosing $\gamma$ and
    $\epsilon$.
    Therefore, for a set $I \in \Sc_\cap$, the choice of $y_I$ that
    promotes the most design flexibility is
    $y_I \coloneqq \argmax_{x \in \Cc} \min_{i \in I} h_i(x)$. It is
    important to note, however, that the specific choice of points
    $\{y_I\}$, while important for design flexibility, is not
    crucial for the results of this paper: any choice makes the
    condition $\gamma \delta > \epsilon$ satisfiable and thus allows
    for the design of a safe~$\Cc^s$ satisfying
    $\proj{n}{\Cc^s} = \proj{n}{\Cc}$.} \oprocend
\end{remark}

To address requirement (iii) in Problem~\ref{prob:problem}, we need to
identify a control action at each state of $\Cc^s$ that makes $\Cc^s$
forward-invariant. We introduce the following condition.

\begin{condition}[General Safety Condition]\label{as:cbfCond}
  {\rm For each $x \in \partial \Cc^s$, there exists $u_x \in \Uc$
    such that
    \begin{align*}
      \dot B_i(x) = \nabla B_i(x)^\top(f(x)+G(x)u_x) > 0
    \end{align*}
    for all
    $i \in \tilde \Ic(x) \coloneqq \{i \in \until{2r} \;|\; \exists
    \ell \in \Lc_i: \; B(x) = B^\ell(x) = B_i(x)\}$, with
    $B^\ell(x) = \min_{i \in \bar \Ic^\ell} B_i(x)$ and
    $\Lc_i \coloneqq \{\ell \in \bar \Lc \;|\; i \in \bar \Ic^\ell
    \}$.} \oprocend
\end{condition}
%
%
Condition~\ref{as:cbfCond} requires that there exists a control input that steers the system to the interior of the safe set at all boundary points where the drift of the system is not guaranteed to to do that on its own.

Consider now the feedback controller $u^*$ defined by the following quadratic program:
\begin{align}
  (u^*&(x),\alpha^*(x),M^*(x)) \coloneqq \notag
  \\
      & \hskip -5mm \argmin_{\alpha, M, u \in \Uc} {u^\top Q(x)u +
        q(x)^\top u + q_\alpha \alpha^2 + q_M M^2} \label{eq:qp}
  \\
  \text{s.t. } & M \geq c_M, \; \alpha \geq c_\alpha \nonumber
  \\
      & \nabla B_i(x)^\top (f(x) + G(x)u) + \alpha B_i(x) \nonumber
  \\
      &+ M\left (B(x) - B^\ell(x) \right ) \geq 0, \nonumber \;
        \forall \ell \in \bar \Lc , \; \forall i \in \bar \Ic^\ell.
        \nonumber 
\end{align}
Here, $Q:\Xc \to \real^{m \times m}$ is a Lipschitz function which
takes values in the set of positive-definite matrices,
%
%
$q:\Xc \to \real^m$ is a Lipschitz function, and $q_\alpha$, $q_M$,
$c_M$, $c_\alpha$ are positive design constants.  The following result
summarizes the properties of~$u^*$ and, in particular, that it makes
$\Cc^s$ control-invariant.
under Condition~\ref{as:cbfCond}.

\begin{theorem}\longthmtitle{Safe
    Controller~\cite[Thm. 4.12]{MA-NA-JC:24-auto}}\label{thm:generalControlInvariance} 
  Let $\Cc^s$ be compact.
  Under Condition~\ref{as:cbfCond}, there exists a neighborhood of
  $\Cc^s$ where program~\eqref{eq:qp} is feasible and $u^*$ is
  single-valued, continuous, and renders $\Cc^s$
  control-invariant. \oprocend
  %
  %
\end{theorem}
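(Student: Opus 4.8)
My plan is to treat the statement as a specialization of the general optimization-based safe-controller result of~\cite{MA-NA-JC:24-auto} and verify its hypotheses for the present max-min function $B = \max_{\ell \in \bar\Lc}\min_{i \in \bar\Ic^\ell} B_i$; equivalently, one reconstructs the argument directly in three stages --- feasibility of~\eqref{eq:qp}, regularity of $u^*$, and forward invariance of $\Cc^s$. For feasibility, I would first show~\eqref{eq:qp} is feasible at every $x \in \Cc^s$: fix such an $x$, take $u_0 = u_x$ from Condition~\ref{as:cbfCond} when $x \in \partial\Cc^s$ and $u_0 \in \Uc$ arbitrary otherwise, and note that for each pair $(\ell,i)$ with $i \in \bar\Ic^\ell$ one has $B(x) \ge B^\ell(x)$ and $B_i(x) \ge B^\ell(x)$. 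When $B^\ell(x) < B(x)$, enlarging $M$ makes $M\big(B(x)-B^\ell(x)\big)$ dominate the remaining (finite) terms; when $B^\ell(x) = B(x)$ and $B_i(x) > B(x)$, enlarging $\alpha$ makes $\alpha B_i(x)$ dominate; and the only remaining case $B_i(x) = B^\ell(x) = B(x)$ is handled at interior points by enlarging $\alpha$ (there $B_i(x) = B(x) > 0$) and at boundary points by Condition~\ref{as:cbfCond}, since then $i \in \tilde\Ic(x)$ and the constraint reduces to $\nabla B_i(x)^\top(f(x)+G(x)u_x) > 0$. Thus a common, large enough $(\alpha,M) \ge (c_\alpha,c_M)$ renders $(u_0,\alpha,M)$ feasible, indeed strictly feasible after a further small increase; by continuity of $B_i,\nabla B_i,f,G$ and compactness of $\Cc^s$, strict feasibility ---hence feasibility--- persists on an open neighborhood $\Nc$ of $\Cc^s$.

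Next I would establish the regularity of $u^*$. On $\Nc$ the feasible set of~\eqref{eq:qp} is nonempty, closed and convex, while the cost is strictly convex ($Q(x)$ positive definite, $q_\alpha, q_M > 0$), so $(u^*,\alpha^*,M^*)$ is single-valued. The locally uniform strict feasibility from the previous step is a Slater condition for this parametric quadratic program, hence the feasible-set map is inner and outer semicontinuous in $x$; Berge's maximum theorem (or a direct parametric-QP estimate) then gives continuity of $(u^*,\alpha^*,M^*)$, and a fortiori of $u^*$, on $\Nc$.

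Finally, for invariance: continuity of $u^*$ makes the closed loop $\dot x = f(x)+G(x)u^*(x)$ admit Carath\'eodory solutions, and along any such solution in $\Cc^s$ the QP constraints give $\dot B_i(x) \ge -\alpha^*(x)B_i(x) - M^*(x)\big(B(x)-B^\ell(x)\big)$ for all $\ell \in \bar\Lc$, $i \in \bar\Ic^\ell$. Wherever $B(x) = 0$, a maximizing index $\ell$ has $B^\ell(x) = B(x) = 0$, so $\dot B_i(x) \ge -\alpha^*(x)B_i(x)$ for $i \in \bar\Ic^\ell$, and for the minimizing indices $i \in \tilde\Ic(x)$ within such an $\ell$ (where $B_i(x)=0$ too) this reads $\dot B_i(x) \ge 0$. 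A nonsmooth comparison/Nagumo argument for max-min functions ---the technical core of the cited result--- then shows $B$ cannot decrease through $0$, so $\Cc^s$ is forward invariant, i.e.\ control-invariant.

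The routine parts are the feasibility step and the single-valuedness of $u^*$. I expect the main obstacles to be the other two items: continuity of $u^*$, which hinges on the uniform Slater condition provided by the slack variables $\alpha, M$ together with a parametric-optimization argument, and the invariance claim, which requires a nonsmooth invariance principle accommodating the max-min structure of $B$ and the switching of its active index set $\tilde\Ic(x)$ along trajectories. Both are exactly what~\cite[Thm.~4.12]{MA-NA-JC:24-auto} delivers, so once the problem-specific feasibility verification is in place the statement follows from that theorem.
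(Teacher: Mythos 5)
The paper does not prove this theorem: it is imported verbatim from~\cite[Thm.~4.12]{MA-NA-JC:24-auto}, so there is no in-paper proof to compare yours against. Your reconstruction of what that cited result must contain is essentially correct and follows the standard architecture one would expect: (i) pointwise feasibility of~\eqref{eq:qp} from Condition~\ref{as:cbfCond} at boundary points plus large slack multipliers $\alpha, M$ elsewhere, extended to a neighborhood by continuity and compactness of $\Cc^s$; (ii) single-valuedness from strict convexity of the cost and continuity from a locally uniform Slater-type condition; (iii) forward invariance from the constraint inequalities via a nonsmooth comparison argument adapted to the max-min structure of $B$. You also correctly identify that the last two items are the genuinely technical content delegated to the cited theorem. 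One small wrinkle in your feasibility step: for a constraint pair $(\ell,i)$ with $B_i(x)<0$ (which can occur on $\Cc^s$, since membership only requires one component $\cap_{i\in\bar\Ic^{\ell'}}\Cc^s_i$ to contain $x$), the term $\alpha B_i(x)$ becomes \emph{more} negative as you enlarge $\alpha$, so ``enlarging $M$ to dominate the remaining finite terms'' is not quite right when $\alpha$ is simultaneously being enlarged for other constraints. This is repaired by noting that $B_i(x)\ge B^\ell(x)$ implies $B(x)-B^\ell(x)\ge -B_i(x)$ on $\Cc^s$, so taking $M\ge\alpha+M'$ with $M'$ large makes $M\bigl(B(x)-B^\ell(x)\bigr)+\alpha B_i(x)\ge M'\bigl(B(x)-B^\ell(x)\bigr)$ and the coupled choice works; worth stating explicitly, but it does not invalidate the argument.
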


Our approach to establish (iii) in Problem~\ref{prob:problem} is then
to verify the hypotheses of
Theorem~\ref{thm:generalControlInvariance}.  The following result
shows that $\Cc^s$ is compact if $\proj{n}{\Cc}$ is compact,
which is readily ensured by Assumption~\ref{as:bounds}.

\begin{proposition}[Compactness of Safe Set]
  \label{prop:compactCbar}
  Under Assumption~\ref{as:bounds}, $\Cc^s$ is compact. \oprocend
\end{proposition}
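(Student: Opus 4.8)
The plan is to realize $\Cc^s$ as a finite union of polyhedra, each of which is shown to be bounded; closedness then comes for free. Since each $B_i$ is affine, each superlevel set $\Cc^s_i$ is a closed half-space, and from the definition of $B$ we have $\Cc^s = \bigcup_{\ell \in \bar\Lc} P_\ell$ with $P_\ell \coloneqq \bigcap_{i \in \bar\Ic^\ell} \Cc^s_i$ a closed polyhedron. As $\Ic = \until{r}$ is finite, there are only finitely many distinct index sets $\bar\Ic^\ell$, so this union is finite and $\Cc^s$ is closed. It remains to show that each $P_\ell$ is bounded.

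Fix $\ell$ and assume $P_\ell \neq \emptyset$ (otherwise there is nothing to prove). Using $\bar\Ic^\ell = \Ic^\ell \cup (\{r\}+\Ic^\ell)$ and $\Cc^s_i = \Cc_i$ for $i \le r$, a point $(x_1,x_2)$ lies in $P_\ell$ exactly when $h_i(x_1) \ge 0$ and $a_i^\top x_2 + \gamma h_i(x_1) - \epsilon \ge 0$ for all $i \in \Ic^\ell$. Being a polyhedron, $P_\ell$ is bounded iff its recession cone $\{(d_1,d_2) : a_i^\top d_1 \ge 0 \text{ and } a_i^\top d_2 + \gamma a_i^\top d_1 \ge 0, \ \forall i \in \Ic^\ell\}$ is trivial. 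The crux is to control $d_1$: the set $\{x_1 : h_i(x_1) \ge 0 \ \forall i \in \Ic^\ell\}$ equals $\proj{n}{\bigcap_{i \in \Ic^\ell}\Cc_i}$, which is contained in $\proj{n}{\Cc}$ because $\bigcap_{i\in\Ic^\ell}\Cc_i$ is one of the terms in the union~\eqref{eq:C} defining $\Cc$; since $\proj{n}{\Cc}$ is compact by Assumption~\ref{as:bounds}, this position polyhedron is bounded, hence its recession cone $\{d_1 : a_i^\top d_1 \ge 0 \ \forall i \in \Ic^\ell\}$ is $\{0\}$. Substituting $d_1 = 0$ into the recession-cone description of $P_\ell$ then forces $a_i^\top d_2 \ge 0$ for all $i \in \Ic^\ell$, so $d_2 = 0$ as well, and $P_\ell$ is bounded.

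Putting the pieces together, $\Cc^s$ is a finite union of closed and bounded sets, hence compact. The only real obstacle is that Assumption~\ref{as:bounds} bounds only the positions $x_1$, while $\Cc^s$ also constrains the velocities $x_2$; the point is that the same normal vectors $\{a_i\}_{i \in \Ic^\ell}$ that make the position polyhedron bounded reappear in the high-order functions $B_{i+r}$, so once $d_1$ is pinned to zero the identical spanning property pins $d_2$ to zero.
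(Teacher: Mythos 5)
Your proof is correct and follows essentially the same route as the paper: both reduce compactness to boundedness of each polyhedral component $\bigcap_{i\in\bar\Ic^\ell}\Cc^s_i$ and exploit the fact that the same normals $a_i$ appearing in $B_i$ reappear in $B_{i+r}$, so that any direction of unboundedness would have $a_i^\top d_1\ge 0$ and $a_i^\top d_2\ge 0$ for all $i\in\Ic^\ell$, contradicting the compactness of $\proj{n}{\Cc}$. The paper phrases this as a contradiction via a ray contained in an unbounded closed convex set, while you run the identical computation forward in recession-cone language (and tidily note that only finitely many distinct index sets can occur); both are sound.
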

\begin{proof}
  We reason by contradiction. Suppose
  $\Cc^s = \cup_{\ell \in \bar \Lc} \cap_{i \in \bar \Ic^\ell}\Cc^s_i$
  is not compact. Then, for some $\ell' \in \Lc$, the convex closed set
  $\cap_{i \in \bar \Ic^{\ell'}}\Cc^s_i$ is unbounded and, thus,
  contains a ray \cite[Result 2.5.1]{BG-VK-MAP-GCS:03}.
  That is there is a
  point $x_0 = (x_{0,1},x_{0,2})$ and a direction
  $\zeta = (\zeta_1, \zeta_2)\neq \mathbf{0}$ such that
  $B_i(x_0 + t \zeta) \geq 0$ for all $t \geq 0$ and all
  $i \in \bar \Ic^{\ell'}$.  We distinguish two cases: (a)
  $\zeta_1 \neq \mathbf{0}$ and (b) $\zeta_1 = \mathbf{0}$. In
  case~(a), $B_i(x_0 + t \zeta) = h_i(x_0 + t \zeta)\geq 0$ for all
  $i \in \Ic^{\ell'}$, so $\proj{n}{\Cc}$ is not compact, which
  contradicts Assumption~\ref{as:bounds}.  In case~(b),
  $\zeta_1 = \mathbf{0} \neq \zeta_2$. So,
  $B_{i+r}(x_0 + t \zeta) = t a_i^\top \zeta_2 + a_i^\top x_{0,2} +
  \gamma (a_i^\top x_{0,1} + b_i) - \epsilon \geq 0$, for all
  $i \in \Ic^{\ell'}$ and all $t \geq 0$. Since all terms in the
  inequality are constants except for the first one, we deduce that
  $t a_i^\top \zeta_2 \geq 0$ for $t$ large enough, which implies
  $a_i^\top \zeta_2 \geq 0$, for all $i \in \Ic^{\ell'}$.  Therefore,
  $h_i(x_0 + t(\zeta_2,\mathbf{0})) = t a_i^\top \zeta_2 + a_i^\top
  x_{0,1}+ b_i = t a_i^\top \zeta_2 + h_i(x_0) \geq 0$ for all
  $t \ge 0$ and all $i \in \Ic^{\ell'}$, which implies that
  $\proj{n}{\Cc}$ is unbounded, contradicting
  Assumption~\ref{as:bounds}.
\end{proof}

Next, we focus on the satisfaction of Condition~\ref{as:cbfCond}. The
following result particularizes this condition to our context.

\begin{lemma}[General Invariance Condition]\label{lem:2ndOrderCond}
  If for all $x \in \partial \Cc^s$ and $i'+r \in \tilde \Ic(x)$
  for which $B_{i'+r}(x) = 0$, there exists $u_x \in \Uc$ such that
  \begin{align}\label{eq:2ndOrderCond}
    a_{i'}^\top (\gamma x_2+f_2(x)+G_2(x)u_x)  > 0, 
  \end{align}
  %
  %
  then Condition~\ref{as:cbfCond} is satisfied. \oprocend
\end{lemma}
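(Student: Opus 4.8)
The plan is to fix an arbitrary $x \in \partial \Cc^s$ and produce a single $u_x \in \Uc$ that makes $\dot B_i(x) > 0$ for \emph{every} $i \in \tilde \Ic(x)$, which is precisely what Condition~\ref{as:cbfCond} demands. (I read the hypothesis of the lemma as furnishing, for each such $x$, one $u_x$ that works simultaneously for all the active velocity indices at $x$; this is the meaning suggested by the notation $u_x$, and I would spell it out at the start of the proof.) The preliminary observation is that $B = \max_{\ell \in \bar \Lc}\min_{i \in \bar \Ic^\ell} B_i$ is continuous, so $\Cc^s = \{B \ge 0\}$ is closed and $\{B > 0\} \subseteq \interior{\Cc^s}$; hence $\partial \Cc^s \subseteq \{B = 0\}$, and in particular $B_i(x) = B(x) = 0$ for all $i \in \tilde \Ic(x)$. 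Since $\tilde \Ic(x) \subseteq \until{2r}$, each active index is either a ``position'' index $i \in \until{r}$ or a ``velocity'' index $i'+r$ with $i' \in \until{r}$, and I would treat the two types separately.

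For a position index $i \in \tilde \Ic(x) \cap \until{r}$: by definition of $\tilde \Ic(x)$ there is $\ell$ with $i \in \bar \Ic^\ell$ and $B(x) = B^\ell(x) = B_i(x) = 0$. The structural identity $\bar \Ic^\ell = \Ic^\ell \cup (\{r\}+\Ic^\ell)$ forces $i \in \Ic^\ell$, hence $i+r \in \bar \Ic^\ell$, so $B_{i+r}(x) \ge B^\ell(x) = 0$; substituting $h_i(x) = B_i(x) = 0$ into~\eqref{eq:bs} gives $B_{i+r}(x) = a_i^\top x_2 - \epsilon$, whence $a_i^\top x_2 \ge \epsilon$. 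Since $\nabla B_i(x) = (a_i, \mathbf{0}_n)$ and $f(x)+G(x)u = (x_2, f_2(x)+G_2(x)u)$, we get $\dot B_i(x) = a_i^\top x_2 \ge \epsilon > 0$ \emph{regardless of} $u \in \Uc$. Thus the position constraints of Condition~\ref{as:cbfCond} hold for free. For a velocity index $i'+r \in \tilde \Ic(x)$: membership already forces $B_{i'+r}(x) = B(x) = 0$, so the proviso in the hypothesis is in force and it supplies $u_x \in \Uc$ with $a_{i'}^\top(\gamma x_2 + f_2(x) + G_2(x)u_x) > 0$; since $\nabla B_{i'+r}(x) = (\gamma a_{i'}, a_{i'})$, a one-line computation gives $\dot B_{i'+r}(x) = \gamma a_{i'}^\top x_2 + a_{i'}^\top(f_2(x)+G_2(x)u_x) = a_{i'}^\top(\gamma x_2 + f_2(x) + G_2(x)u_x) > 0$.

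Putting the cases together, the control $u_x$ of the hypothesis (depending only on $x$) satisfies all the strict velocity inequalities in Condition~\ref{as:cbfCond}, while every active position inequality holds for this same $u_x$ automatically; since $\tilde \Ic(x)$ is exhausted by these two types and $x \in \partial \Cc^s$ was arbitrary, Condition~\ref{as:cbfCond} is verified. The only nonroutine step is the ``for free'' satisfaction of the position constraints: it rests entirely on the design choices that (a) every $\bar \Ic^\ell$ pairs $i+r$ with $i$, so that $B_{i+r}(x) \ge 0$ whenever $B_i$ achieves the inner minimum, and (b) $B_{i+r} = a_i^\top x_2 + \gamma h_i - \epsilon$ with $\epsilon > 0$, so that the nonstrict bound $B_{i+r}(x) \ge 0$ upgrades to the \emph{strict} bound $a_i^\top x_2 \ge \epsilon > 0$. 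That is where I expect the writeup to need the most care; the velocity case is a direct gradient evaluation, and the reduction to~\eqref{eq:2ndOrderCond} is then immediate.
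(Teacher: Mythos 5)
Your argument is correct and follows essentially the same route as the paper's proof: the same case split between position indices $i \le r$ (where $B_{i+r}(x) \ge 0$ forces $\dot B_i(x) = a_i^\top x_2 \ge \epsilon > 0$ independently of the input) and velocity indices $i > r$ (where $\dot B_i(x)$ is exactly the left-hand side of~\eqref{eq:2ndOrderCond} and the hypothesis applies). Your explicit remark that a single $u_x$ must serve all active velocity indices simultaneously, and your justification of $B_{i+r}(x)\ge 0$ via $B_{i+r}(x)\ge B^\ell(x)=0$, only make explicit what the paper leaves implicit.
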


\begin{proof}
  If $x \in \partial \Cc^s$, then $B(x) = 0$ and thus $B_i(x) = 0$ for
  $i \in \tilde \Ic(x)$. For any such $i \in \tilde \Ic(x)$, it
  follows that $B_{i'}(x) \geq 0$ for any $\ell \in \Lc_i$ and
  $i' \in \bar \Ic^\ell$. Let us now verify the inequality of
  Condition~\ref{as:cbfCond} for $i \in \tilde \Ic(x)$. We distinguish
  two cases: (a) $i \leq r$ or (b) $i>r$. In case (a), $B_i(x) = a_i^\top x_1 +b_i = 0$ and $i' = i+r \in \bar
  \Ic^\ell$. Thus,
  \begin{align}
    0 \leq B_{i'}(x)
    &= a_i^\top x_2 + \gamma B_i(x) -\epsilon=a_i^\top x_2 - \epsilon. \label{eq:x2}
  \end{align}
  So $\dot B_i(x) = a_i^\top x_2 \geq \epsilon > 0$
  by~\eqref{eq:x2}. In case (b), for $i' = i - r$,
  $\dot B_i(x) = \dot B_{i'+r}(x)$ equals the left-hand side
  of~\eqref{eq:2ndOrderCond}, which verifies
  Condition~\ref{as:cbfCond} by assumption.
\end{proof}

We use the inequality~\eqref{eq:2ndOrderCond} to establish safety
under specific structural assumptions on our dynamics.  The inequality
can fail to hold in two ways. The first is when there is no
$u_x \in \real^m$ that satisfies it. The second is when such a
$u_x \in \real^m$ exists, but it does not belong to the allowable
input set~$\Uc$. The distinction between the two possibilities is
useful since, in practice, they can be overcome by different tools. If
the failure is of the second type, then a practical solution might be
to expand $\Uc$ (e.g., by employing a more powerful actuator). If the
failure is of the first type, such a solution is not possible, but
rather the set $\Cc^s$ cannot be made safe given the dynamics, and a
better design must be sought. We exploit this distinction in what
follows.

The following result shows that Condition~\ref{as:cbfCond} is satisfied by fully
actuated systems.

\begin{proposition}[Control-Invariance with Full
  Actuation]\label{prop:CIwFullControl}
  Assume that, for all $x \in \partial \Cc^s$, $G_2(x)$ is right
  invertible and $\Uc = \real^m$. Then, under
  Assumption~\ref{as:bounds}, choosing $\gamma$ and $\epsilon$ such
  that $\gamma \delta > \epsilon$ ensures that
  Condition~\ref{as:cbfCond} is satisfied.  \oprocend
\end{proposition}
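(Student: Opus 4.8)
The plan is to establish Condition~\ref{as:cbfCond} by verifying the hypothesis of Lemma~\ref{lem:2ndOrderCond}. Fix $x \in \partial\Cc^s$ and set $I(x) \coloneqq \{i' \in \Ic \;|\; i'+r \in \tilde\Ic(x),\ B_{i'+r}(x) = 0\}$; if $I(x) = \emptyset$ the hypothesis of Lemma~\ref{lem:2ndOrderCond} is vacuous at $x$, so assume $I(x)\neq\emptyset$. First I would collect the elementary facts available at such a point. For $i' \in I(x)$, the index $i'+r$ lies in some $\bar\Ic^\ell$ with $B^\ell(x) = B(x) = 0$, hence $B_j(x)\geq 0$ for all $j\in\bar\Ic^\ell$; in particular $h_{i'}(x) = B_{i'}(x) \geq 0$, and from $B_{i'+r}(x)=0$ we get $a_{i'}^\top x_2 = \epsilon - \gamma h_{i'}(x)$. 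Moreover, $\Cc^s\subseteq\Cc$ by Lemma~\ref{lem:noLostPos} and $x\in\partial\Cc^s\subseteq\Cc^s$, so $x\in\Cc$; together with $h_{i'}(x)\geq 0$ this shows $x\in\big(\bigcap_{i'\in J}\Cc_{i'}\big)\cap\Cc$ for every $J\subseteq I(x)$, hence every such $J$ belongs to $\Sc_\cap$.

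Since $\Uc=\real^m$ and $G_2(x)$ is right invertible, $\gamma x_2 + f_2(x) + G_2(x)u$ sweeps out all of $\real^n$ as $u$ varies over $\real^m$, so producing the required $u_x$ is equivalent to the purely geometric claim that the open cone $\{v\in\real^n \;|\; a_{i'}^\top v>0\ \text{for all}\ i'\in I(x)\}$ is nonempty, i.e., $\mathbf{0} \notin \co{\{a_{i'}\;|\;i'\in I(x)\}}$. I expect this to be the crux of the argument, and it is where the condition $\gamma\delta>\epsilon$ enters. I would argue by contradiction: suppose $\sum_{i'\in J}\lambda_{i'}a_{i'} = \mathbf{0}$ for some nonempty $J\subseteq I(x)$ with $\lambda_{i'}>0$ and total mass $\Lambda\coloneqq\sum_{i'\in J}\lambda_{i'}>0$. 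Pairing this relation with $x_2$ and using $a_{i'}^\top x_2 = \epsilon - \gamma h_{i'}(x)$ gives $\sum_{i'\in J}\lambda_{i'}h_{i'}(x) = \epsilon\Lambda/\gamma$. On the other hand, for any point $z=(z_1,z_2)$ one has $\sum_{i'\in J}\lambda_{i'}h_{i'}(z) = \big(\sum_{i'\in J}\lambda_{i'}a_{i'}\big)^\top z_1 + \sum_{i'\in J}\lambda_{i'}b_{i'} = \sum_{i'\in J}\lambda_{i'}b_{i'}$, a constant; evaluating at the point $y_J$ from Assumption~\ref{as:bounds}, which satisfies $h_{i'}(y_J)\geq\delta$ for all $i'\in J$ by the definition~\eqref{eq:delta} of $\delta$, yields $\epsilon\Lambda/\gamma = \sum_{i'\in J}\lambda_{i'}b_{i'} = \sum_{i'\in J}\lambda_{i'}h_{i'}(y_J) \geq \delta\Lambda$, hence $\epsilon\geq\gamma\delta$, a contradiction.

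Finally, with $\mathbf{0}\notin\co{\{a_{i'}\;|\;i'\in I(x)\}}$ and the latter set compact and convex, the separating hyperplane theorem furnishes $v\in\real^n$ with $a_{i'}^\top v>0$ for all $i'\in I(x)$; right invertibility of $G_2(x)$ then lets me select $u_x\in\real^m=\Uc$ with $G_2(x)u_x = v - f_2(x) - \gamma x_2$, so that $a_{i'}^\top(\gamma x_2 + f_2(x) + G_2(x)u_x) = a_{i'}^\top v > 0$ for every $i'\in I(x)$ at once. This verifies inequality~\eqref{eq:2ndOrderCond} for all the relevant indices, so Lemma~\ref{lem:2ndOrderCond} applies and Condition~\ref{as:cbfCond} holds.
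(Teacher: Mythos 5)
Your proof is correct, but it takes a genuinely different route from the paper's. The paper is direct and constructive: since $B_{i+r}(x)=0$ for $i\in I_x$, it writes $a_i^\top(-x_2-\gamma x_1+\gamma y_{I_x})=\gamma(a_i^\top y_{I_x}+b_i)-\epsilon\geq\gamma\delta-\epsilon>0$, so the single explicit vector $y_x=-x_2-\gamma x_1+\gamma y_{I_x}$ already witnesses that the cone of directions decreasing no active constraint is nonempty, and then it takes $u_x=\rho G_2^\dagger(x)y_x$ with $\rho$ large enough to dominate the drift. You instead prove nonemptiness of that cone nonconstructively: a Gordan/separating-hyperplane argument reduces it to $\mathbf{0}\notin\co{\{a_{i'}\}_{i'\in I(x)}}$, which you refute by pairing a putative vanishing convex combination with $x_2$ and with $y_J$ to derive $\epsilon\geq\gamma\delta$. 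Your dual argument is essentially the Farkas counterpart of the paper's primal certificate, and it isolates cleanly why $\gamma\delta>\epsilon$ is exactly the right hypothesis; all the supporting steps (membership of $J$ in $\Sc_\cap$, the identity $a_{i'}^\top x_2=\epsilon-\gamma h_{i'}(x)$, the use of full actuation to sweep $\real^n$) check out. What the paper's explicit construction buys, and yours does not immediately, is a concrete, uniformly bounded $y_x$ over the compact set $\partial\Cc^s$, which is precisely what Corollary~\ref{cor:CIwFullControl} reuses to conclude that a compact input set suffices; your separating vector $v$ would need an additional compactness/uniformity argument to serve that later purpose.
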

\begin{proof}  
  Given $x = (x_1,x_2) \in \partial \Cc^s$, $B(x) = 0$. Define
  $I_x \coloneqq \{i \in \{1,\dots, r\} \;|\; i+r \in \tilde \Ic(x),
  B_{i+r}(x) = 0\}$. If $I_x$ is empty then the premise of
  Lemma~\ref{lem:2ndOrderCond} is satisfied by default. Otherwise,
  by the definitions of $\tilde \Ic(x)$ and $\bar \Ic^\ell$'s,
  $B_i(x) \geq 0$ for all $i \in I_x$.  Therefore, by
  Assumption~\ref{as:bounds}, there exists $y_{I_x} \in \Cc$
  satisfying $h_i(y_{I_x})>0$ for all $i \in I_x$.
  %
  Then, for all $i \in I_x$
  \begin{align}
    B_{i+r}(x) = a_i^\top x_2 + \gamma(a_i^\top x_1
    &+ b_i) - \epsilon
      = 0 \notag
    \\ 
    \implies a_i^\top (-x_2 - \gamma x_1 +\gamma y_{I_x})
    &=
      \gamma(a_i^\top
      y_{I_x} +
      b_i) -
      \epsilon
      \notag
    \\
    & \geq \gamma \delta - \epsilon > 0. \label{eq:yx}
  \end{align}
  Now, because of~\eqref{eq:yx}, one can choose
  $u_x = \rho G_2^\dagger (x)y_x$, where $G_2^\dagger $ denotes the right
  inverse of $G_2$ and $y_x = -x_2 - \gamma x_1 +\gamma y_{I_x}$, with
  $\rho$ large enough, so that the inequality~\eqref{eq:2ndOrderCond}
  is satisfied. The result then follows from~Lemma~\ref{lem:2ndOrderCond}.
\end{proof}

The combination of Proposition~\ref{prop:compactCbar} and
Proposition~\ref{prop:CIwFullControl} allows us to invoke
Theorem~\ref{thm:generalControlInvariance} to establish that $u^*$, as
defined in~\eqref{eq:qp}, renders $\Cc^s \subseteq \Cc$
forward-invariant. This, together with Lemma~\ref{lem:noLostPos},
means that $\Cc^s$ solves Problem~\ref{prob:problem} for the case of
full actuation.

The assumption of full actuation is not uncommon in the control of
second-order systems, whether when studying
safety~\cite{WSC-DVD:22,AWF-TH:21} or
stability~\cite{RO-AL-PJN-HSR:98}.  The following result shows that
the result of Proposition~\ref{prop:CIwFullControl} still holds for
sufficiently large compact input sets~$\Uc$.

\begin{corollary}[Compact Input Set Suffices for
  Control-Invariance]\label{cor:CIwFullControl}
  Assume that, for all $x \in \partial \Cc^s$, $G_2(x)$ is right
  invertible. Then, under Assumption~\ref{as:bounds}, choosing
  $\gamma$ and $\epsilon$ such that $\gamma \delta > \epsilon$ ensures
  that Condition~\ref{as:cbfCond} is satisfied for some compact input
  set $\Uc \subsetneq \real^m$. \oprocend
\end{corollary}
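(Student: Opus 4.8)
The plan is to revisit the proof of Proposition~\ref{prop:CIwFullControl} and notice that the only place $\Uc = \real^m$ was used is that the control $u_x = \rho\, G_2^\dagger(x)\, y_x$ constructed there to satisfy~\eqref{eq:2ndOrderCond} has a magnitude that depends on $x$ (through $\rho$, $G_2^\dagger(x)$, and $y_x$). Since $\Cc^s$ is compact by Proposition~\ref{prop:compactCbar}, so is $\partial \Cc^s$, and I would show that each of these quantities admits a bound that is \emph{uniform} over $x \in \partial \Cc^s$, yielding a single radius $R>0$ such that the closed ball $\Uc \coloneqq \{u \in \real^m \;|\; \|u\| \le R\}$ — a compact proper subset of $\real^m$ — contains a control satisfying~\eqref{eq:2ndOrderCond} at every $x \in \partial \Cc^s$. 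Lemma~\ref{lem:2ndOrderCond} then delivers Condition~\ref{as:cbfCond}.

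First I would fix $x = (x_1,x_2) \in \partial \Cc^s$ and, following the proof of Proposition~\ref{prop:CIwFullControl}, recall $I_x$, the point $y_{I_x}$ furnished by Assumption~\ref{as:bounds}, and $y_x = -x_2 - \gamma x_1 + \gamma y_{I_x}$, so that $G_2(x) u_x = \rho\, y_x$ for $u_x = \rho\, G_2^\dagger(x) y_x$, and~\eqref{eq:2ndOrderCond} becomes $a_{i'}^\top(\gamma x_2 + f_2(x)) + \rho\, a_{i'}^\top y_x > 0$ for all $i' \in I_x$. By~\eqref{eq:yx}, $a_{i'}^\top y_x \ge \gamma \delta - \epsilon \eqqcolon c_0 > 0$, a constant independent of $x$ and $i'$. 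The term $-a_{i'}^\top(\gamma x_2 + f_2(x))$ is continuous in $x$ (as $f_2$ is Lipschitz), and $i'$ ranges over the finite set $\until{r}$, so it is bounded above on the compact set $\partial \Cc^s$ by some $c_1$; hence the single choice $\rho \coloneqq 1 + c_1/c_0$ makes~\eqref{eq:2ndOrderCond} hold simultaneously for all $i' \in I_x$, at every $x \in \partial \Cc^s$. Boundary points with $I_x = \emptyset$, and indices $i \le r$ in $\tilde \Ic(x)$, impose no constraint on $u_x$ (exactly as in the proof of Lemma~\ref{lem:2ndOrderCond}), so $u_x = 0 \in \Uc$ works there.

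It then remains to bound $\|u_x\| \le \rho\, \|G_2^\dagger(x)\|\, \|y_x\|$ uniformly. Since $x \in \Cc^s$ is compact and $y_{I_x}$ is drawn from the finite family $\{y_I \;|\; I \in \Sc_\cap\}$ fixed in Assumption~\ref{as:bounds}, $\|y_x\| \le c_3$ for a constant $c_3$. For $\|G_2^\dagger(x)\|$: by hypothesis $G_2(x)$ has full row rank for every $x \in \partial \Cc^s$, so on this compact set $x \mapsto G_2(x)$ is continuous of constant (full) rank, whence the (Moore--Penrose) right inverse $x \mapsto G_2^\dagger(x)$ is continuous; equivalently, the smallest singular value of $G_2(x)$ is continuous and strictly positive on $\partial \Cc^s$, hence bounded below by a positive constant, giving $\|G_2^\dagger(x)\| \le c_2$. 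Setting $R \coloneqq \rho\, c_2\, c_3$ closes the argument.

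The main obstacle is the uniform bound on $\|G_2^\dagger(x)\|$: it rests on the continuity of the smallest-singular-value (equivalently, of the pseudoinverse under constant rank) together with the compactness of $\partial \Cc^s$ and the everywhere-right-invertibility hypothesis, which jointly prevent $\sigma_{\min}(G_2(x))$ from approaching $0$. Everything else is a routine compactness-and-continuity estimate; the only point requiring care is that the choice of $\rho$ — and hence of $R$ — be independent of $x$, which works precisely because the denominator lower bound $c_0 = \gamma \delta - \epsilon$ in~\eqref{eq:yx} is a positive constant that does not depend on $x$.
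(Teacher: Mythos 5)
Your proposal is correct and follows essentially the same route as the paper's own proof: use compactness of $\partial \Cc^s$ (Proposition~\ref{prop:compactCbar}) to bound $a_i^\top(\gamma x_2 + f_2(x))$ uniformly, use the $x$-independent lower bound $a_i^\top y_x \geq \gamma\delta - \epsilon > 0$ from~\eqref{eq:yx} to fix a single finite $\rho$, and then bound $\|G_2^\dagger(x)\|$ and $\|y_x\|$ on the compact boundary to obtain a compact ball $\Uc$. Your write-up merely fills in details the paper leaves implicit (continuity of the pseudoinverse under the constant full-row-rank hypothesis, and finiteness of the family $\{y_I\}$), which is a faithful elaboration rather than a different argument.
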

%
\begin{proof}
  Since, by Proposition~\ref{prop:compactCbar}, $\partial \Cc^s$ is
  compact and the left-hand side of \eqref{eq:2ndOrderCond} is
  continuous in $x$, $a_i^\top (\gamma x_2 + f_2(x))$ is bounded in
  $\partial \Cc^s$. But
  $a_i^\top G_2(x)G_2(x)^\dagger y_x = a_i^\top y_x \geq \gamma \delta -
  \epsilon > 0$. So there is a finite $\rho$ that
  validates~\eqref{eq:2ndOrderCond} for all $x$ with
  $u_x = \rho G_2(x)^\dagger y_x$. Noting the boundedness of $y_x$ and
  $G_2(x)^\dagger $ in $\partial \Cc^s$ completes the proof.
\end{proof}

This result does not specify how large the input set should be. This
is the task we tackle in the forthcoming section.


\section{Determining Input Magnitude for Control Invariance of
  Euler-Lagrange Systems}\label{sec:inputConstraints}
%

%


In this section we consider the class of Euler-Lagrange
systems~\cite{RO-AL-PJN-HSR:98} and study how large the input set should be 
%
%
to render $\Cc^s$ control-invariant. 

\begin{assumption}[Input Set Structure and Euler-Lagrange Systems
  Properties]\label{as:stop}
  \rm{Let $\Uc = \{u \in \real^m \;|\; \|u\| \leq d\}$ for some
    $d > 0$. Further, assume the dynamics~\eqref{eq:SODynamics} is
    such that:
    \begin{enumerate}[(a)]
    \item the matrix function $G_2(x)$ is only dependent on $x_1$ and
      has right inverse $G_2^\dagger (x_1)$ defined for all
      $x \in \Cc$;
      %
      %
      and
    \item $f_2(x) = f^1_2(x_1) + f^2_2(x)$, with
      $\|f^2_2(x)\| \leq k_2 \|x_2\|$. \oprocend
  \end{enumerate}}
\end{assumption}

Assumption~\ref{as:stop}(a) can be interpreted as having the inertia
matrix $G_2^\dagger (x)$ solely depend on the system's positions (and
not on the system's velocity). Assumption~\ref{as:stop}(b) splits the
forcing on the system into a component generated by potential fields,
such as gravity, $f_2^1(x_1)$, and a component containing other
forces, such as damping, $f_2^2(x)$. It also asks that the magnitude
of the latter to be at most proportional to the system's
velocity~$\|x_2\|$. Define constants $k_1, k_G$ by
\begin{align}
  k_1 \coloneqq \max_{x \in \Cc} \|f_2^1(x_1)\| \;\text{  and  }\;
  k_G \coloneqq \max_{x \in \Cc} \|G_2^\dagger (x_1)\|. \label{eq:ks} 
\end{align}
The existence of these constants is ensured by the continuity of $f_2$
and $G_2^\dagger $ and the compactness of $\proj{n}{\Cc}$,
cf.~\cite[Thm. 4.16]{WR:76}. Our approach to establishing
control-invariance under Assumption~\ref{as:stop} has two steps.
First, we show that the velocity magnitude $\|x_2\|$ in the safe set
$\Cc^s$ can be arbitrarily bounded by the design parameter $\gamma$,
cf. Lemma~\ref{lem:gamma}. Second, we show that $\Uc$ in
Assumption~\ref{as:stop} is sufficient for control-invariance when
$\|x_2\|$ is forced to be small enough through suitable design of
$\gamma$, cf. Theorem~\ref{thm:limitedU}.

\begin{lemma}[Bound on Velocity Magnitude in Safe
  Set]\label{lem:gamma}
  Under Assumption~\ref{as:bounds}, there exists a constant $c$ that
  depends only on $\{a_i,b_i\}_{i \in \Ic}$ defining
  $\{h_i\}_{i \in \Ic}$ such that, for all $\gamma$ and $\epsilon$
  satisfying $\gamma \delta > \epsilon$, $\|x_2\| \leq \gamma c$ for
  all $x = (x_1,x_2) \in \Cc^s$. \oprocend
\end{lemma}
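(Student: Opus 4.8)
The plan is to exploit that membership in $\Cc^s$ simultaneously confines $x_1$ to a bounded polyhedron and $x_2/\gamma$ to a polyhedron that does not depend on $\gamma$, and then to show that this second polyhedron is bounded. Fix $x=(x_1,x_2)\in\Cc^s=\bigcup_{\ell\in\bar\Lc}\bigcap_{i\in\bar\Ic^\ell}\Cc^s_i$ and choose $\ell\in\bar\Lc=\Lc$ with $x\in\bigcap_{i\in\bar\Ic^\ell}\Cc^s_i$. Since $\Cc^s_i=\Cc_i$ for $i\in\Ic^\ell\subseteq\until{r}$, this first gives $x_1\in Q_\ell\coloneqq\{z\in\real^n\;|\;h_i(z)\geq0\ \forall i\in\Ic^\ell\}$; and from $B_{i+r}(x)\geq0$ together with $h_i(x_1)=a_i^\top x_1+b_i$,
\[
a_i^\top x_2\;\geq\;\epsilon-\gamma h_i(x_1)\;\geq\;-\gamma h_i(x_1),\qquad i\in\Ic^\ell .
\]

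Next I would restrict attention to the $\ell$ with $Q_\ell\neq\emptyset$: if $Q_\ell=\emptyset$ then $\bigcap_{i\in\bar\Ic^\ell}\Cc^s_i\subseteq\bigcap_{i\in\Ic^\ell}\Cc_i=\emptyset$, so that $\ell$ contributes nothing to $\Cc^s$. For $Q_\ell\neq\emptyset$, the set $Q_\ell$ is compact: by~\eqref{eq:C} one has $\bigcap_{i\in\Ic^\ell}\Cc_i\subseteq\Cc$, hence $Q_\ell\subseteq\proj{n}{\Cc}$, which is compact by Assumption~\ref{as:bounds}, and $Q_\ell$ is closed as an intersection of half-spaces. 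Therefore $M_{i,\ell}\coloneqq\max_{z\in Q_\ell}h_i(z)$ is finite and nonnegative, and dividing the displayed inequality by $\gamma>0$ yields $a_i^\top(x_2/\gamma)\geq-M_{i,\ell}$ for all $i\in\Ic^\ell$; that is, $x_2/\gamma\in R_\ell\coloneqq\{w\in\real^n\;|\;a_i^\top w\geq-M_{i,\ell}\ \forall i\in\Ic^\ell\}$.

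The key step is that $R_\ell$ is bounded. Both $Q_\ell$ and $R_\ell$ are nonempty ($0\in R_\ell$, since $M_{i,\ell}\geq0$) polyhedra cut out by the same linear functionals $a_i^\top(\cdot)$, so they share the recession cone $\{d\in\real^n\;|\;a_i^\top d\geq0\ \forall i\in\Ic^\ell\}$; since $Q_\ell$ is bounded this cone equals $\{0\}$, and a nonempty polyhedron with trivial recession cone is bounded. Hence $c_\ell\coloneqq\max_{w\in R_\ell}\|w\|<\infty$, and setting $c\coloneqq\max\{c_\ell\;|\;\ell\in\Lc,\ Q_\ell\neq\emptyset\}$ --- a quantity fixed by the data $\{(a_i,b_i)\}_{i\in\Ic}$ (through the $M_{i,\ell}$ and the index sets), and in particular independent of $\gamma$ and $\epsilon$ --- we obtain $\|x_2\|\leq\gamma c_\ell\leq\gamma c$ for every $x=(x_1,x_2)\in\Cc^s$, which is the claim.

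I expect the boundedness of $R_\ell$ to be the only real obstacle: it must be extracted from Assumption~\ref{as:bounds} rather than assumed, the point being that rescaling the velocity constraints by $1/\gamma$ yields a polyhedron with the same recession cone as the (bounded) position polyhedron $Q_\ell$, so the design parameter $\gamma$ enters only as an overall scaling. The remaining items --- discarding empty $Q_\ell$, and verifying $M_{i,\ell}\geq0$ so that $R_\ell$ is nonempty --- are routine.
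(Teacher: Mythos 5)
Your proof is correct, but it takes a genuinely different route from the paper's. The paper first invokes Proposition~\ref{prop:compactCbar} to get that each component $\cap_{i\in\bar\Ic^\ell}\Cc^s_i$ is a bounded polytope, then maximizes $|e_j^\top x_2|$ by a linear program whose optimum sits at a vertex, identifies $2n$ active constraints there, and explicitly inverts the resulting block system \eqref{eq:asbs} to express $x_2^{*,\ell}=\gamma(A_2^{-1}b_2-A_1^{-1}b_1)+\epsilon A_2^{-1}\mathbf{1_n}$, finally using $\epsilon<\gamma\delta$ to absorb the $\epsilon$-term into a $\gamma$-multiple. You instead divide the velocity constraints by $\gamma$ and observe that $x_2/\gamma$ lands in a fixed polyhedron $R_\ell$ cut out by the same functionals $a_i^\top(\cdot)$ as the position polytope $Q_\ell$; since nonempty polyhedra with the same constraint normals share a recession cone and $Q_\ell\subseteq\proj{n}{\Cc}$ is bounded by Assumption~\ref{as:bounds}, that cone is trivial and $R_\ell$ is bounded. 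Your argument is shorter and sidesteps two delicate points in the paper's proof: it never needs the (implicit) claim that the active set at the optimal vertex splits into exactly $n$ position rows and $n$ velocity rows with both blocks invertible, and it uses only $\epsilon>0$ rather than $\gamma\delta>\epsilon$, so the conclusion is slightly stronger. What the paper's route buys in exchange is an explicitly computable constant in terms of inverses of the active-constraint matrices (useful if one wants to evaluate $c$ numerically), whereas your $c=\max_\ell\max_{w\in R_\ell}\|w\|$ requires solving a norm-maximization over a polytope. One cosmetic remark: like the paper's constant, yours depends on the index sets $\Ic^\ell$ and not solely on $\{a_i,b_i\}_{i\in\Ic}$, but that is a looseness in the lemma statement itself, not a defect of your argument.
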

%
%
\begin{proof}
  By Proposition~\ref{prop:compactCbar}, $\Cc^s$ is
  bounded.
  %
  Therefore, each of its (finitely many) components
  $\cap_{i \in \bar \Ic^\ell} \Cc^s_i$ is bounded too.  Since each
  $\Cc^s_i$ is a half-space, $\cap_{i \in \bar \Ic^\ell} \Cc^s_i$ is a
  bounded polytope. Given $\ell \in \bar \Lc$, consider the $n$
  programs
  \begin{align*}
    x_j^*
    &= \argmax |e_j^\top x_2|
    \\
    & \qquad \text{s.t.}\; x \in \cap_{i \in \bar \Ic^\ell} \Cc^s_i,
  \end{align*}
  where $j \in \{1,\dots, n\}$. Let
  $j_\ell^* \coloneqq \argmax_{1\leq j \leq n} \{|e_j^\top
  x_{j,2}^*|\}$, where $x_{j,2}^*$ denotes the last $n$ components of
  $x_{j}^*$,
  %
  %
  and $x^{*,\ell} \coloneqq x^*_{j_\ell^*}$. Since
  $x^{*,\ell} \in \real^{2n}$ is a solution to a linear program over a
  polytope, it is a vertex of the
  polytope~\cite[Thm. 2.4]{DG-MJT:89}. By~\cite[Thm. 10.4]{AB:12},
  there are $2n$ indices
  %
  %
  %
  $\Ic_v \subseteq \bar \Ic^\ell$ such that $B_i(x^{*,\ell})=0$, for
  all $i \in \Ic_v$. Direct evaluation yields
  \begin{align*}
    a_i^\top
    \begin{bmatrix}
      I
      &
        \pmb{0}_{n \times n}
    \end{bmatrix}
    x^{*,\ell}
    &= -b_i, \; \text{if} \; i \in \Ic_v \cap \Ic,
    \\ 
    a_i^\top
    \begin{bmatrix}
      \gamma I & I
    \end{bmatrix}
    x^{*,\ell}
    &=
      -\gamma
      b_i
      +
      \epsilon,
      \;
      \text{if}
      \;
      i+r
      \in
      \Ic_v. 
  \end{align*}
  Stacking the above equations into one matrix equation gives
  \begin{align}
    \begin{bmatrix}
      A_1 & \pmb{0}\\
      \gamma A_2 & A_2
    \end{bmatrix} x^{*,\ell} = \begin{bmatrix}
      b_1 \\ \gamma b_2 + \epsilon \mathbf{1_n}
    \end{bmatrix}, \label{eq:asbs}
  \end{align}
  with appropriate matrices $A_1, A_2, b_1$ and $b_2$.
  Since $x^{*,\ell}$ is a unique solution as it is a vertex,
  the coefficient matrix
  in~\eqref{eq:asbs} is invertible, and thus is of rank $2n$. The
  matrices $A_1$ and $A_2$ have $n$ columns, so their ranks are at
  most $n$. Since the rank of the block
  $\begin{bmatrix} \gamma A_2 & A_2 \end{bmatrix}$ is equal to the
  rank of $A_2$, then it must be that both rank$(A_1)=$
  rank$(A_2)= n$, and hence and both $A_1$ and $A_2$ are invertible.
  Solving for $x_2^{*,\ell}$, which is the vector comprising the last
  $n$ components of $x^{*,\ell}$ gives
  $x_2^{*,\ell} = \gamma (A_2^{-1}b_2 - A_1^{-1}b_1) + \epsilon
  A_2^{-1}\mathbf{1_n}$.
  %
  By definition of $x^{*,\ell}$ and $j_\ell^*$
  and the triangle inequality, for all $j \in \{1, \dots, n\}$ and all
  $x = (x_1,x_2) \in \cap_{i\in \bar \Ic^\ell} \Cc^s_i$,
  \begin{align*}
    |e_j^\top x_2| \leq |e_{j_\ell^*}^\top x_2^{*,\ell}| \leq
    \gamma c'_1 + \epsilon c'_2 \leq \gamma (c'_1 + \delta c'_2) ,    
  \end{align*}
  where $c'_1 = |e_{j_\ell^*}^\top (A_1^{-1}b_1 - A_2^{-1}b_2)|$ and
  $c'_2 = \|A_2^{-1}\mathbf{1_n}\|$.
  Since this holds for each
  $\ell \in \bar \Lc$
  with the appropriate matrices $A_1,A_2,b_1$ and
  $b_2$, every component of
  $x_2 \in \Cc^s = \cup_{j \in \bar \Lc} \cap_{i \in \bar \Ic^\ell}
  \Cc^s_i$ is bounded by $\gamma c'^*$, where $c'^*$ is the greatest
  of the finite constants $c'_1+\delta c'_2$ for the different
  $\ell$'s. Therefore, $\|x_2\| \leq \gamma (\sqrt{n}c'^*)$.
  %
  %
\end{proof}

An interesting byproduct of Lemma~\ref{lem:gamma}
is that the design parameter $\gamma$ can be used to ensure safety
constraints on the velocity too. In fact, one can leverage the result
to meet any safety specification on the magnitude of $x_2$ by taking a
sufficiently small~$\gamma$. The smaller the parameter~$\gamma$ is,
the slower the system will move, and therefore control objectives
other than safety may be hindered.
%
%
Notice also that, no matter how small $\gamma$ is taken to be, with a
suitably small $\epsilon$, none of the originally allowed positions in
$\Cc$ will be lost in $\Cc^s$, cf. Lemma~\ref{lem:noLostPos}.
%
%
We are now ready to show that Assumption~\ref{as:stop} is enough to
establish the existence of a design parameter $\gamma$ that makes
$\Cc^s$ control-invariant with bounded input set
$\Uc = \{u \in \real^m \;|\; \|u\| \leq d\}$.

\begin{theorem}[Control-Invariance with Input
  Constraints]\label{thm:limitedU}
  Under Assumptions~\ref{as:bounds} and~\ref{as:stop}, if
  $d - k_G k_1 > 0$, cf.~\eqref{eq:ks},
  then $\gamma$ and $\epsilon$ can be chosen to ensure
  Condition~\ref{as:cbfCond} is satisfied. \oprocend
\end{theorem}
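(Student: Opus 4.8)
The plan is to verify the hypothesis of Lemma~\ref{lem:2ndOrderCond}, i.e., that at every $x \in \partial \Cc^s$ with $i'+r \in \tilde\Ic(x)$ and $B_{i'+r}(x) = 0$, there is an admissible $u_x \in \Uc = \{u : \|u\| \le d\}$ satisfying $a_{i'}^\top(\gamma x_2 + f_2(x) + G_2(x) u_x) > 0$. Exactly as in the proof of Proposition~\ref{prop:CIwFullControl}, I would work with the set $I_x = \{i : i+r \in \tilde\Ic(x), \ B_{i+r}(x) = 0\}$; if it is empty there is nothing to prove, and otherwise Assumption~\ref{as:bounds} supplies a point $y_{I_x} \in \Cc$ with $h_i(y_{I_x}) > 0$ for $i \in I_x$, yielding $a_i^\top(-x_2 - \gamma x_1 + \gamma y_{I_x}) \ge \gamma\delta - \epsilon > 0$ for all $i \in I_x$. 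The candidate control is $u_x = \rho\, G_2^\dagger(x_1)\, y_x$ with $y_x = -x_2 - \gamma x_1 + \gamma y_{I_x}$ and $\rho > 0$ a gain to be fixed, which by construction makes $a_{i'}^\top G_2(x) u_x = \rho\, a_{i'}^\top y_x > 0$; the only remaining obstacle, and the heart of the theorem, is to choose $\rho$ and $\gamma$ so that this dominates the drift term $a_{i'}^\top(\gamma x_2 + f_2(x))$ \emph{while} keeping $\|u_x\| \le d$.

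The key estimates come from Assumption~\ref{as:stop} and Lemma~\ref{lem:gamma}. First, I would note that $x_1, y_{I_x}$ lie in the compact set $\proj{n}{\Cc}$, so $\|x_1 - y_{I_x}\|$ is bounded by some constant depending only on the geometry; combined with Lemma~\ref{lem:gamma}, which gives $\|x_2\| \le \gamma c$, this yields $\|y_x\| = \|-x_2 - \gamma x_1 + \gamma y_{I_x}\| \le \gamma(c + c'')$ for a constant $c''$. Hence $\|u_x\| \le \rho\, k_G\, \gamma (c + c'')$, so to respect $\|u_x\| \le d$ it suffices to take $\rho\gamma \le d / (k_G(c+c''))$; in particular $\rho$ is allowed to be as large as $d / (k_G(c+c'')\gamma)$, which grows as $\gamma \to 0$. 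Meanwhile, using $f_2(x) = f_2^1(x_1) + f_2^2(x)$ with $\|f_2^1(x_1)\| \le k_1$ and $\|f_2^2(x)\| \le k_2\|x_2\| \le k_2 \gamma c$, the drift term is bounded as $|a_{i'}^\top(\gamma x_2 + f_2(x))| \le \|a_{i'}\|(\gamma^2 c + k_1 + k_2 \gamma c)$, which for small $\gamma$ is dominated by the constant $\|a_{i'}\| k_1$.

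Putting these together: we need $\rho\, a_{i'}^\top y_x > |a_{i'}^\top(\gamma x_2 + f_2(x))|$. Since $a_{i'}^\top y_x \ge \gamma\delta - \epsilon$, and taking $\epsilon$ with, say, $\gamma\delta - \epsilon \ge \tfrac12 \gamma\delta$ (possible under $\gamma\delta > \epsilon$, e.g. $\epsilon \le \tfrac12\gamma\delta$), it suffices that $\tfrac12 \rho\gamma\delta > \|a_{i'}\|(\gamma^2 c + k_1 + k_2\gamma c)$. Choosing $\rho = d/(k_G(c+c'')\gamma)$, the left-hand side becomes $\tfrac{d\delta}{2k_G(c+c'')}$, a \emph{positive constant independent of $\gamma$}, while the right-hand side tends to $\|a_{i'}\| k_1$ as $\gamma \to 0$. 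Thus the inequality holds for all sufficiently small $\gamma$ provided $\tfrac{d\delta}{2k_G(c+c'')} > \|a_{\max}\| k_1$ where $a_{\max}$ is the largest $\|a_{i'}\|$; one absorbs the geometric constants $\delta, c, c'', \|a_{\max}\|$ and finds that a condition of the form $d\, k_G^{-1} > (\text{const}) k_1$ is what makes it work, matching the stated hypothesis $d - k_G k_1 > 0$ up to how the $a_i$ are normalized. I would then conclude by invoking Lemma~\ref{lem:2ndOrderCond}.

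I expect the main obstacle to be bookkeeping the constants so that the final sufficient condition comes out as exactly $d - k_G k_1 > 0$ rather than a cruder inequality with extra geometric factors; this likely requires a more careful choice of the candidate control than the naive $\rho G_2^\dagger y_x$ above (for instance, normalizing $y_x$ or only asking the drift to be countered in the worst direction), and possibly choosing $\gamma$ and $\epsilon$ jointly rather than sequentially. The qualitative mechanism, however, is robust: shrinking $\gamma$ simultaneously shrinks the velocity (hence the velocity-dependent drift and the size of $y_x$) and frees up actuator budget to use a large gain $\rho$, so the only irreducible requirement is that the bounded input can overcome the \emph{$\gamma$-independent} part of the drift, namely the potential term bounded by $k_1$, after passing through the inverse input matrix bounded by $k_G$.
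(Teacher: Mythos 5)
Your overall strategy is the right one (reduce to Lemma~\ref{lem:2ndOrderCond}, reuse the vector $y_x = -x_2-\gamma x_1+\gamma y_{I_x}$ from Proposition~\ref{prop:CIwFullControl}, and use Lemma~\ref{lem:gamma} to make all velocity-dependent terms vanish as $\gamma\to 0$), but the candidate control $u_x=\rho\,G_2^\dagger y_x$ is the wrong one, and you correctly sense this at the end: with that choice you only obtain a condition of the form $d> C\,k_Gk_1$ with $C=2(c+c'')\|a_{\max}\|/\delta$, a geometric constant that has no reason to be $\le 1$. The loss is intrinsic to the approach, not to the bookkeeping: the efficiency of pushing along $y_x$ is governed by the alignment ratio $a_{i'}^\top y_x/\|y_x\|\ge(\gamma\delta-\epsilon)/(\gamma(c+c''))$, and when $y_x$ is nearly orthogonal to $a_{i'}$ you must burn a large control magnitude to produce a small inward component, so no refinement of the constants recovers the sharp threshold $d-k_Gk_1>0$.

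The missing idea in the paper's proof is drift cancellation. Take
\begin{align*}
  u_x \;=\; G_2^\dagger(x_1)\bigl(-(f_2(x)+\gamma x_2)\bigr)+\beta_x\, y_x ,
\end{align*}
so that the left-hand side of~\eqref{eq:2ndOrderCond} collapses to $\beta_x\,a_{i'}^\top y_x>0$ for \emph{every} $\beta_x>0$; since the residual drift is exactly zero, the inward push need not dominate anything and $\beta_x$ can be taken arbitrarily small. The control budget is then spent almost entirely on cancellation: $\|u_x\|\le k_G\bigl(k_1+(k_2+\gamma)\gamma c\bigr)+\beta_x\|y_x\|$, and choosing $\gamma$ so that $\gamma(k_2+\gamma)k_Gc<\tfrac12(d-k_1k_G)$ (then $\epsilon<\gamma\delta$) and $\beta_x$ so that $\beta_x\|y_x\|\le\tfrac12(d-k_1k_G)$ gives $\|u_x\|\le d$ precisely under the stated hypothesis $d-k_Gk_1>0$. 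In short: your mechanism ("shrink $\gamma$ to free up budget against the $\gamma$-independent part of the drift") is the right qualitative picture, but the irreducible cost is that of \emph{cancelling} the potential term, $k_Gk_1$, not that of \emph{outmuscling} it along a possibly misaligned direction; only the cancellation construction turns the hypothesis $d-k_Gk_1>0$ into a proof.
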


\begin{proof}
  Let $\gamma$ be such that
  $\gamma (k_2+\gamma)k_G c < \frac{1}{2}(d-k_1k_G)$, where $c$ is the
  constant in Lemma~\ref{lem:gamma}. Now, select $\epsilon$ to satisfy
  $\gamma \delta > \epsilon$.
  %
  %
  For all $x \in \partial \Cc^s$
  and $i \in \tilde \Ic(x)$ with $i > r$, recall from the proof of
  Proposition~\ref{prop:CIwFullControl} that $a_i^\top y_x > 0$, where
  $y_x$ is as defined there. Let $\beta_x$ be a positive constant
  satisfying $\beta_x \|y_x\| \leq \frac{1}{2}(d - k_1k_G)$ and choose
  %
  %
  $u_x = -G_2^\dagger (x)(f_2(x)+\gamma x_2) + \beta_x y_x$. By the triangle
  inequality and the definition of the matrix induced 2-norm,
  \begin{align*}
    \|u_x\| \leq \|G_2^\dagger (x)\|(\|f_2^1(x_1)\|+\|f_2^2(x)\|+\gamma \|x_2\|)
    + \beta_x\|y_x\| .    
  \end{align*}
  By Assumption~\ref{as:stop}, Lemma~\ref{lem:gamma},
  and~\eqref{eq:ks},
  $\|u_x\|\leq k_Gk_1 + \gamma (k_2+\gamma)k_G c + \beta_x\|y_x\|$. By
  our choice of $\gamma$ and $\beta_x$,
  $\|u_x\| \leq k_Gk_1 + \frac{1}{2}(d-k_Gk_1) + \frac{1}{2}(d-k_Gk_1)
  = d$, and thus $u_x \in \Uc$. Noting that $u_x \in \Uc$
  validates~\eqref{eq:2ndOrderCond} for all $i \in \tilde \Ic(x)$ with
  $i > r$ completes the proof.
\end{proof}

The condition $d - k_Gk_1 > 0$ in Theorem~\ref{thm:limitedU} amounts
to having enough control authority to counter the force caused by the
potential field. In the absence of such a force (i.e., when $k_1 = 0$),
as in the case of a
robot moving in a plane perpendicular to gravity, this condition is
valid by default. For such systems, $\Cc^s$ can be made safe for any
bounded~$\Uc$.


\section{Safe Control of Robotic Manipulator}
%
%
%
In this section, we illustrate our results on a 2-degree-of-freedom
planar elbow manipulator \cite{MWS-SH-MV:20}. We build upon
\cite{WSC-DVD:22}, which consider the same example with hyper-cubic
safety constraints on the position, and generalize these to general
polytopic constraints. The system consists of two horizontally
oriented links (no gravity) hinged to each other, with the first link
hinged to a fixed frame. Torque manipulation is available at each
joint. The system model is
\begin{align}\label{eq:ex1Dynamics}
  \underbrace{
  \begin{bmatrix}
    c_{11} + c_{12}\cos(\theta_2)
    & c_{13} + c_{14}\cos(\theta_2)
    \\
    c_{22} + c_{23}\cos(\theta_2)
    & c_{21}
  \end{bmatrix}}_{M}
  \begin{bmatrix}
    \ddot \theta_1
    \\
    \ddot \theta_2
  \end{bmatrix}
  & =
  \\
    & \hskip -4cm 
      \underbrace{
      \begin{bmatrix}
        c_{15}\sin(\theta_2)\dot \theta_2
        & c_{16}\sin(\theta_2)\dot \theta_2
        \\
        c_{24}\sin(\theta_2)\dot \theta_1
        & 0
      \end{bmatrix}}_{C}
      \begin{bmatrix}
        \dot \theta_1
        \\
        \dot \theta_2
      \end{bmatrix}
      \notag
  \\
    & \hskip -2.5cm
      +\begin{bmatrix}
         u_1
         \\
         c_{25}\cos(\theta_1 + \theta_2) + u_2
       \end{bmatrix},
      \notag
\end{align}
with constant coefficients $c_{ij}$ dependent on the links' lengths
and masses, cf.~\cite{MWS-SH-MV:20}. We require the arm to avoid
collision with a wall, as shown in Figure~\ref{fig:wall}.
%
%

In the space of angles, the positional constraints $\Cc$ correspond to
the dotted hexagon in Figure~\ref{fig:portrait}. Thus, $y_I$ can be
taken as the origin for all the intersections. The parameter $\delta$
can be calculated according to~\eqref{eq:delta} to be $\delta = \pi/2$.
%
%
We construct the control-invariant set $\Cc^s$ as described in
Section~\ref{sec:cbarConstruction}, with $\gamma = 10$ and
$\epsilon = 0.1$, which satisfy $\gamma \delta > \epsilon$.
%
%
The controller~$u^*$ is computed according to~\eqref{eq:qp}, with the
objective function $\|u - u_{\text{nom}}(t)\|^2$. Here,
$u_{\text{nom}}$ is a nominal controller that tracks
$r = (\pi \sin(t),\frac{\pi}{2}\sin(4t))$, cf.~\cite{WSC-DVD:20},
\begin{align*}
  u_{\text{nom}}(t) = M(\ddot r -\dot e - e) + C [\dot \theta_1 \;\;
  \dot \theta_2]^\top ,  
\end{align*}
where $e = (\theta_1,\theta_2) - r$ and $M$ and $C$ are as defined
in~\eqref{eq:ex1Dynamics}. We take $\alpha(r) = 40r$. The high value
$40$ is not necessary for feasibility of the program~\eqref{eq:qp},
but we select it to allow the trajectory to come close to the boundary
of the safe set.


\begin{figure*}
  \begin{subfigure}{0.15\linewidth}
  \centering
  \begin{tikzpicture}[scale=.95]
    \begin{axis}[xmin=-.2, xmax=0.6,ymin=-1,ymax=2,xtick=\empty,
      ytick=\empty,width=3.5cm, height=5.6cm]
      \draw[draw = gray, fill = gray] (axis cs: 0,-3) rectangle (axis cs: -.2,3);
      \draw [ultra thick] (axis cs: 0,0) node{}
      -- (axis cs: 0.2,0.6) node{}
      -- (axis cs: 0.3,1.4) node{};
      \draw [thick, dotted] (axis cs: 0,0) node{}
      -- (axis cs: 0.2,0) node{};
      \draw [thick, dotted] (axis cs: 0.2,0.6) node{}
      -- (axis cs: 0.4,0.6) node{};
      \draw (axis cs: 0.2,0.12) node {$\theta_1$};
      \draw (axis cs: 0.4,0.75) node {$\theta_2$};
    \end{axis}
  \end{tikzpicture}
  \caption{A schematic of the 2-degrees-of-freedom robot arm. The gray
    space is the wall that should be avoided during manipulation.}
  \label{fig:wall}
\end{subfigure}
\hspace{.01mm} 
  \begin{subfigure}{0.25\linewidth}
    \centering
    \begin{tikzpicture}
      \begin{axis}[ylabel style={yshift=-2.5mm}, label style={font=\footnotesize}, tick label style={
        font=\footnotesize}, height=3.5cm, width=5.3cm, xmajorticks=false,ymax = 6,legend columns=-1,legend pos = north west, legend style={draw=none,font=\tiny}, ylabel = {$\theta_1$ (rad)}]
        \addplot[no markers] table [x=t, y=q1_safe, col sep=comma] {example_Data.csv};
        \addplot[no markers,red] table [x=t, y=q1_unsafe, col sep=comma] {example_Data.csv};
        \addplot[no markers,dotted,semithick] table [x=t, y=q1_ul, col sep=comma] {example_Data.csv};
        \addplot[no markers,dotted,semithick] table [x=t, y=q1_ll, col sep=comma] {example_Data.csv};
        \legend{with $u^*$, with $u_{\text{nom}}$}
      \end{axis}
      \begin{axis}[ylabel style={yshift=-2.5mm},label style={font=\footnotesize}, tick label style={
        font=\footnotesize}, yshift=-1.75cm,height=3.3cm,width=5.3cm,ylabel = {$\theta_2$ (rad)},xlabel = {time (s)}]
        \addplot[no markers] table [x=t, y=q2_safe, col sep=comma] {example_Data.csv};   
        \addplot[no markers,red] table [x=t, y=q2_unsafe, col sep=comma] {example_Data.csv};
        \addplot[no markers,dotted,semithick,forget plot] table [x=t, y=q2_ul, col sep=comma] {example_Data.csv};
        \addplot[no markers,dotted,semithick,forget plot] table [x=t, y=q2_ll, col sep=comma] {example_Data.csv}; 
      \end{axis}
    \end{tikzpicture}
    \caption{Time evolution of the link angles. The dotted lines are
      the safety constraints as seen from the safe trajectory.}
    \label{fig:thetas}
  \end{subfigure}
  \hspace{.01mm} 
  \begin{subfigure}{0.25\linewidth}
    \centering
    \begin{tikzpicture}
      \begin{axis}[ylabel style={yshift=-3mm},label style={font=\footnotesize}, tick label style={
        font=\footnotesize} ,legend columns=-1,legend pos = north west, legend style={draw=none,font=\tiny},width = 5.3cm,height = 5.2cm, xmin=-3.5, xmax = 3.5, ymin = -4.2, ymax = 3.5,ylabel = {$\theta_2$ (rad)},xlabel = {$\theta_1$ (rad)}]
        \addplot[no markers,semithick] table [x=q1_safe, y=q2_safe, col sep=comma] {example_Data.csv};
        \addplot[no markers,red,semithick] table [x=q1_unsafe, y=q2_unsafe, col sep=comma] {example_Data.csv}; 
        \draw [dotted, thick] (axis cs: -1.57,-1.57) node{}
        -- (axis cs: -1.57,1.57) node{}
        -- (axis cs: 0,3.14) node{}
        -- (axis cs: 1.57,1.57) node{}
        -- (axis cs: 1.57,-1.57) node{}
        -- (axis cs: 0,-3.14) node{}
        -- cycle;
        \legend{with $u^*$, with $u_{\text{nom}}$}
      \end{axis}    
    \end{tikzpicture}
    \caption{Trajectories on the position plane, showing the
      invariance enforced by the controller $u^*$ instead
      of~$u_{\text{nom}}$.}
    \label{fig:portrait}
  \end{subfigure}
  \hspace{.01mm} 
  \begin{subfigure}{0.25\linewidth}
    \centering
    \begin{tikzpicture}
      \begin{axis}[ylabel style={yshift=-2.2mm}, label style={font=\footnotesize}, tick label style={
        font=\footnotesize}, height=3.5cm, width=5.2 cm, xmajorticks=false,legend columns=-1,legend pos = north west, legend style={draw=none,font=\scriptsize},ymax=70,ylabel style={align=center},ylabel = input \\ magnitude \\ (N$\cdot$m)]
        \addplot[no markers] table [x=t_norm, y=u_norm_g1, col sep=comma] {example_Data.csv};
        \addplot[no markers,red] table [x=t_norm, y=u_norm_g.1, col sep=comma] {example_Data.csv};
        \legend{$\gamma = 1$, $\gamma = 0.1$}
        \end{axis}
        \begin{axis}[ylabel style={yshift=-2.2mm}, label style={font=\footnotesize}, tick label style={
          font=\footnotesize}, yshift=-1.75cm,height=3.3cm,width=5.2cm,ylabel style={align=center}, ylabel= velocity \\ magnitude \\ (rad/s),xlabel = {time (s)}]
        \addplot[no markers] table [x=t_norm, y=v_norm_g1, col sep=comma] {example_Data.csv};
        \addplot[no markers,red] table [x=t_norm, y=v_norm_g.1, col sep=comma] {example_Data.csv};
        \end{axis}
    \end{tikzpicture}
    \caption{Time evolution of the average velocity and
      average control effort on both links for different values of
      $\gamma$.}
    \label{fig:norms}
  \end{subfigure}
  \caption{Simulation results for safe control of a 2-link robot
    arm.}\label{fig:time-evols}
  \vspace*{-2.5ex}
\end{figure*}
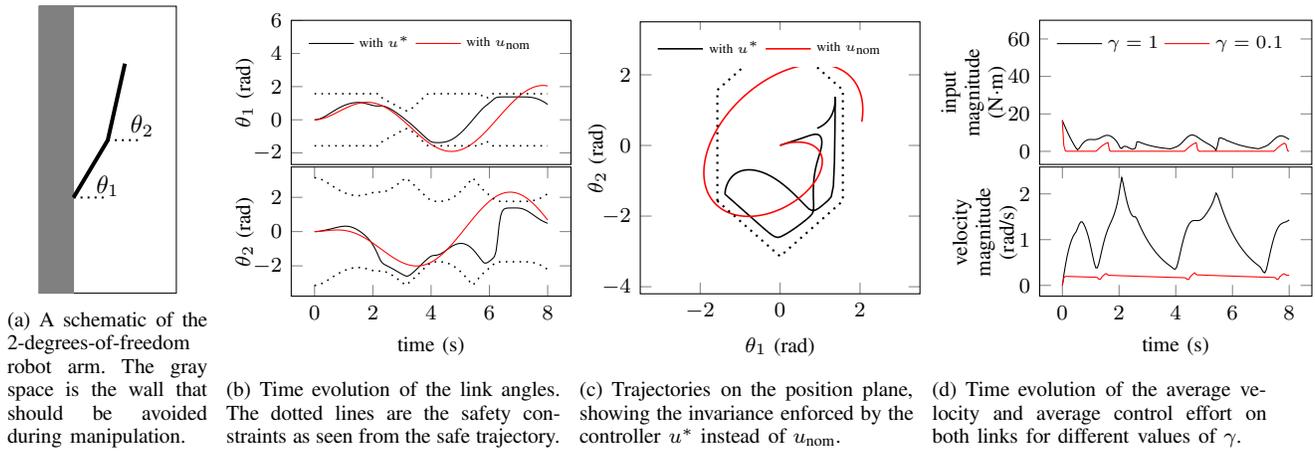

Figure~\ref{fig:thetas} shows the time evolution of the joint angles
under the nominal controller and under the safety-filtered
controller~$u^*$, along with the constraint evolution as viewed from
the position of the safe trajectory. Figure~\ref{fig:portrait} shows a
phase portrait of the evolution of the angles under the nominal and
the safe controllers. Note how the safe controller renders $\Cc$
invariant, as guaranteed by Corollary~\ref{cor:CIwFullControl}, while
the nominal controller does not.

We also note that the system~\eqref{eq:ex1Dynamics} satisfies
Assumption~\ref{as:stop} with
$f_2^1(x) = M^{-1}C \begin{bmatrix}\dot \theta_1 & \dot
  \theta_2 \end{bmatrix}^\top$,
$f_2^1(x)= M^{-1}\begin{bmatrix} 0 & c_{25}
  \cos(\theta_1+\theta_2)\end{bmatrix}^\top$ and $G_2(x_1) =
M^{-1}$. Thus, by Theorem~\ref{thm:limitedU}, any
$\Uc = \{u \in \real^2 \;|\; \|u\| \leq d\}$ such that $d > k_Gk_1$
suffices for invariance, with a sufficiently small $\gamma$ and a
suitably chosen $\epsilon$. This is reflected in
Figure~\ref{fig:norms}, which shows the effect of the design parameter
$\gamma$ on the control and velocity magnitudes.  As shown there,
lower values of $\gamma$ allow for safety with lower control
magnitudes, at the expense of reducing the velocity of the
execution. Finally, we note that extending this example to higher
degrees of freedom
adds complexity to the polytopic representation of the constraints
but, once available, the computation of the control design remains the
same.


\section{Conclusions}
Given a second-order system and positional safety specifications
described by linear boundaries, we have identified conditions that
allow the explicit construction of a verifiably safe set in the full
state space. We have also designed an associated QP controller that
ensures this set is safe. We have shown that the identified conditions
are always satisfied by fully actuated systems and, in the case of
Euler-Lagrange systems, we have shown how the controller design can
incorporate velocity and input constraints.  We believe the approach
presented here will be helpful in the design of safety-critical
controllers for second-order systems with other forms of
underactuation, something we plan to study in the future.  Other
extensions include incorporating time-varying safety considerations,
robustifying the approach to handle system uncertainties, and applying
a similar safe-set construction to specifications that are not
necessarily linear.


\bibliography{../bib/alias,../bib/Main-add,../bib/JC}
\bibliographystyle{unsrt}
\end{document}